\theoremstyle{thmstyleone}%
\newtheorem{theorem}{Theorem}
\theoremstyle{thmstyletwo}%
\newtheorem{example}{Example}%
\newtheorem{remark}{Remark}%
\theoremstyle{thmstylethree}%
\newtheorem{definition}{Definition}%
\newtheorem{lemma}{Lemma}
\begin{document}

\title[Article Title]{On the construction of ultra-light MDS matrices\footnote{This research was supported by National Key Research and Development Project under Grant No. 2018YFA0704705 and CAS Project for Young Scientists in Basic Research (Grant No. YSBR-035).}}


\author[1]{\fnm{Yu} \sur{Tian}}

\author*[2]{\fnm{Xiutao} \sur{Feng}}\email{fengxt@amss.ac.cn}

\author[1]{\fnm{Guangrong} \sur{Li}}

\affil[1]{\orgdiv{College of Artificial Intelligence}, \orgname{Nanning Vocational and Technical University} \orgaddress{\street{} \city{Naning}, \postcode{} \state{} \country{China}}}

\affil*[2]{\orgdiv{Key Laboratory of Mathematics Mechanization, Academy of Mathematics and Systems Science}, \orgname{Chinese Academy of Sciences},
\orgaddress{\street{} \city{Beijing}, \postcode{} \state{} \country{China}}}


\abstract{In recent years, the Substitution-Permutation Network has emerged as a crucial structure for constructing symmetric key ciphers. Composed primarily of linear matrices and nonlinear S-boxes, it offers a robust foundation for cryptographic security. Among the various metrics used to assess the cryptographic properties of linear matrices, the branch number stands out as a particularly important index. Matrices with an optimal branch number are referred to as MDS matrices and are highly prized in the field of cryptography. In this paper we delve into the construction of lightweight MDS matrices. We commence implementation trees of MDS matrices, which is a vital tool for understanding and manipulating their implementations, and then present an algorithm that efficiently enumerates all the lightest MDS matrices based on the word representation. As results, we obtain a series of ultra-lightweight $4\times 4$ MDS matrices, remarkably, 4-bit input MDS matrices with 35 XOR operations and 8-bit input ones with 67 XOR operations . These matrices represent the most comprehensive lightweight MDS matrices available to date. Furthermore, we craft some involution $4\times 4$ MDS matrices with a mere 68 XOR gates.To our best knowledge, they are the best up to date. In the realm of higher-order MDS matrices, we have successfully constructed $5\times 5$ and $6\times 6$ matrices with 114 and 148 XOR gates respectively. These findings outperform the current state-of-the-art.}

\keywords{Lightweight cryptography, MDS matrix, Commutative ring, XOR-count}



\maketitle

\section{Introduction}\label{sec1}

The Substitution-Permutation Network (SPN, for brevity) has emerged as a pivotal component in the construction of symmetric ciphers, particularly in recent years. Due to the prevalence of the block cipher AES \cite{daemen2013design}, this structure is extensively studied and typically comprises three primary elements: a key schedule, a small-sized (usually 4- or 8-bits) nonlinear function known as the S-box, and a larger (often 32-, 64- or 128-bits) linear function. The S-box serves to mix the bits within a 4- or 8-bit word, while the linear function mixes words together.

According to the wide trail design strategy \cite{daemen2001wide}, the resilience of SPN ciphers against classical attacks, particularly differential and linear attacks, can be assessed by examining their constituent components. Specifically, for a nonlinear S-box, a high nonlinearity and low differential uniformity are desirable; S-boxes achieving optimal differential uniformity are termed Almost Perfect Nonlinear (APN). As for the linear function, a high branch number is expected to establish strong diffusion between input and output words; matrices corresponding to linear functions with an optimal branch number are known as MDS matrices (related to Maximum Distance Separable codes). Besides their use in SPN ciphers like AES, MDS matrices also find application in Feistel ciphers (e.g., Camellia \cite{aoki2001128}, SMS4 \cite{StateCryptographyAdministration}), hash functions (e.g., Whirlpool \cite{barreto2000whirlpool}, Grostl \cite{gauravaram2009grostl}), and stream ciphers (e.g., ZUC \cite{feng2011zuc}).

With the increasing miniaturization of electronic devices handling sensitive data, there is a growing need for novel cryptographic primitives that offer low implementation costs. In contrast to standardized, robust primitives like AES, lighter cryptographic solutions are often preferred in resource-constrained environments. To reduce the cost of the SPN structure, one approach is to optimize its main components: the S-box and the linear function. Numerous studies have explored lightweight alternatives for these components, as evidenced by works on S-boxes \cite{li2014constructing, canteaut2015construction} and linear functions \cite{sajadieh2012recursive, wu2012recursive, augot2013exhaustive, berger2013construction, augot2014direct, sim2015lightweight, li2016construction, liu2016lightweight, sarkar2016lightweight}. These findings have facilitated the design of various new cipher proposals aimed at balancing cost and security in constrained settings, such as Present \cite{bogdanov2007present}, KATAN \cite{de2009katan}, LED \cite{guo2011led}, LBlock \cite{wu2011lblock}, Prince \cite{borghoff2012prince}, Skinny \cite{beierle2016skinny}, among others. These examples underscore the advantages of low-cost cryptographic primitives.

MDS matrices, with their optimal branch number and excellent diffusion properties, are well-suited for linear diffusion layers and are widely used in cryptographic designs. However, it's important to note that MDS matrices can be relatively dense, resulting in higher hardware implementation costs(larger circuit areas). Therefore, to incorporate MDS matrices as linear diffusion layers in lightweight ciphers, it becomes necessary to optimize existing MDS matrices or construct new lightweight MDS matrices that strike a balance between security and efficiency.

\subsection{Related works}

There has been a significant amount of research dedicated to constructing lightweight MDS matrices or near-MDS matrices.

One common approach for constructing MDS matrices involves selecting a matrix with a specific structure, such as Hadamard, Cauchy, Vandermonde, Toeplitz, circulant, or other matrices with special properties. Then, specific values are assigned to the variable parameters of the matrix to ensure that all its submatrices are nonsingular, resulting in an MDS matrix. This method reduces the search space by focusing on one or a few specific types of matrices. Additionally, for matrices with special structures, there are often many repeated values in the determinants of their submatrices, which reduces the computational complexity when verifying the nonsingularity of the submatrices. Remarkably, for Cauchy matrices, their structure guarantees that they are MDS matrices regardless of the parameter values. In summary, searching for MDS matrices within matrices with special structures is computationally efficient and often successful. There is a considerable amount of literature on this topic, including works based on Cauchy matrices \cite{cui2010construction}, \cite{cui2014compact}, \cite{dau2015constructions}, \cite{cui2021construction}, \cite{mousavi2021involutory},  Vandermonde matrices  \cite{mattoussi2012complexity}, \cite{li2018direct}, \cite{yaici2019particular}, \cite{yu2020comparison}, Hadamard matrices \cite{tan2017orthogonal}, \cite{pehlivanouglu2018generalisation}, \cite{cong2020new}, \cite{you2021construction}, \cite{pehlivanouglu2021construction}, \cite{zhou2022construction},  Toeplitz matrices \cite{sarkar2017analysis}, \cite{sharma2017constructions},  \cite{pehlivanoglu2018generating}, \cite{sakalli2020lightweight}, \cite{chen2019constructions}, and circulant matrices \cite{cauchois2019circulant}, \cite{malakhov2021construction}, \cite{wang2022more},  \cite{adhiguna2022orthogonal}, \cite{wang2022inverse}.

However, this approach of limiting the search to specific matrix structures has some drawbacks. The implementation cost of the resulting matrices is often difficult to control. Early methods focused on optimizing the implementation of matrix elements, known as local optimization. For example, in a circulant matrix, where the entire matrix is determined by its first row consisting of $k$ elements (which are $n\times n$ invertible matrices over $F_2$), local optimization involves optimizing the implementation of these $k$ matrices while preserving the XOR operations between them. This approach cannot avoid the overhead of $(k-1)kn$ XOR operations for the entire matrix. Relevant works include \cite{jean2017optimizing}, \cite{zhou2018efficient}, \cite{kolsch2019xor}. Specifically, in \cite{kolsch2019xor}, K\"{o}lsch provided the specific structure of all matrices that can be implemented using only 2 XORs for element (treated as a matrix) multiplication with a vector in a field of characteristic 2. Local optimization methods have a smaller computational footprint because they only focus on optimizing matrix elements, but their effectiveness is limited as they do not consider the overall matrix structure.

To improve optimization efficiency, a natural approach is to treat a $k\times k$ matrix with  $n\times n$ invertible matrices as elements as a single $kn\times kn$ matrix over $F_2$ and optimize the entire matrix, known as global optimization \cite{kranz2017shorter}. It involves in the so-called SLP (Straight Line Program) problem. In \cite{boyar2013logic}, Boyar et al. reduced the point set covering problem to a subproblem of the SLP problem (SLP problem with Hamming weight limited to 3 for each row) and proved that the SLP problem is NP-hard. This indicates that finding a universally applicable and efficient algorithm for it is challenging. Currently, heuristic algorithms are commonly used to address this issue, including the Paar algorithm \cite{paar1997optimized}, the BP algorithm \cite{boyar2008shortest} and its variants \cite{visconti2018improved}, \cite{reyhani2018smashing}, \cite{maximov2019new}, \cite{banik2019more}, \cite{T2020heuristics}, \cite{baksi2021three}.

Recently, Yang et al. introduced a new search method for involutory MDS matrices and obtained a $4\times 4$ involutory MDS matrix implementable with only 35 XORs (4-bit input) and 70 XORs (8-bit input)~\cite{yang2021construction}. Here it should be pointed out that their 35-XOR result has being optimal in the 4-bit input. Pehlivano?lu et al. further considered 3-input and higher XOR gates in their variant of the BP algorithm and achieved MDS matrices with optimized circuit area \cite{pehlivanouglu2023construction}.

An alternative approach to constructing lightweight MDS matrices is to limit the implementation cost of the matrix, then derive the matrix from its implementation, and finally check if it satisfies the MDS condition, i.e., all submatrices are nonsingular. This method was first introduced by Duval and Leurent in \cite{duval2018mds} and extended by Zeng et al \cite{wang2023four, ShiWang2022}. Roughly speaking,  it considers matrices of the form $M = A_1A_2\cdots A_t$ with an upper bound $t$, where $A_1,A_2,\cdots,A_t$ have low implementation costs. By means of careful selection for $A_i$, some $4\times 4$ MDS matrix with 35 XOR gates (4-bit input) and 67 XOR gates (8-bit input) are constructed, which have been proven to be optimal under a word-based structure \cite{venkateswarlu2022lower}. In \cite{sajadieh2021construction}, Sajadieh and Mousavi further applied it to search higher-order MDS matrices with a General Feistel Structure and obtained some  $6\times 6$ and $8\times 8$ MDS matrices with relatively low costs.

\subsection{Our contributions}

In this study, our main objective is to construct the lightest weight MDS matrices. We first investigate the structures of MDS matrices with low implementation costs and give all possible structures with optimal costs. Based on these structures, we then generate a significant number of new ultra-light MDS matrices. 

More speaking, we introduce the concept of an "implementation tree". Based on the implementation tree, we develop a new method that starts with several unit vectors and iteratively generates rows for the MDS matrix. As each new row is generated, we check if it satisfies the MDS condition. This verification step is crucial as it eliminates a large number of unproductive branches during the search process. Through this refined approach, we are able to identify all MDS matrix structures with optimal costs (based on word metrics). By leveraging these structures, we successfully find all 60 nonequivalent $4\times 4$ MDS matrices that can be implemented with just 67 XOR operations on the ring $F_2[x]/(x^8 + x^2 + 1)$ and 35 XOR operations on the ring $F_2[x]/(x^4 + x + 1)$. Notably, our results encompass the 10 matrices in \cite{wang2023four} and the 52 matrices in \cite{ShiWang2022}. Here it should be pointed that our method is also capable of producing a significant quantity of $4\times 4$ MDS matrices with 36-41 XOR gates for 4-bit input and 68-80 XOR gates for 8-bit input by modestly relaxing the implementation cost constraints. Due to the limit of the paper, we only list the counts of those with the lowest cost for various depths in Table \ref{tab:2}. Furthermore, we identify $4\times 4$ involutory MDS matrices on the same ring that require only 68 XOR operations for implementation. To the best of our knowledge, it is the most optimal result currently available.

To demonstrate the effectiveness of our algorithm, we conduct a comparative analysis with the methodology described in \cite{duval2018mds}. The results of this comparison are presented in Table \ref{comparison}.

\begin{center}
\scalebox{0.7}{
\begin{threeparttable}[htbp]
	
	\caption{The number of MDS matrices of different depths}
	
		\label{tab:2} 
		\begin{tabular}{cccccc}
			\hline\noalign{\smallskip}
			Ring & Depth & Cost & \cite{duval2018mds} & \cite{ShiWang2022} & Our results  \\
			\noalign{\smallskip}\hline\noalign{\smallskip}
			$F_2[\alpha]$ & -  & 67 & 2 & 52 & 60\\
			& 5 & 67 & 2 & 4 & 4\\
			& 4 & 69 & 1 & 4 & 7\\
			& 3 & 77 & 1 & 2 & 3 \\
			$F_2[\beta]$ & -  & 35 & 2 & 52 & 60\\
			& 5 & 35 & 2 & 4 & 4\\
			& 4 & 37 & 1 & 4 & 7\\
			& 3 & 41 & 1 & 2 & 3 \\
			\noalign{\smallskip}\hline
		\end{tabular}
		\begin{tablenotes}
			\item  $\alpha$ is the companion matrix of $x^8+x^2+1$, $\beta$ is the companion matrix of $x^4+x+1$.
		\end{tablenotes}
	
\end{threeparttable}
}
\end{center}

\begin{center}
\begin{threeparttable}[htbp]
	\caption{\label{comparison}Comparison of the efficiency of our algorithm with previous results}
	\centering
	\begin{tabular}{cccccccc}
		\hline
		\multirow{2}{*}{Ring} & \multirow{2}{*}{Cost} & \multirow{2}{*}{Depth} & \multicolumn{2}{c}{Previous Algorithm} & \multicolumn{2}{c}{Our Algorithm} & \multirow{2}{*}{Ref.} \\
		\cline{4-5}\cline{6-7}
		& & & Memory & Time & Memory & Time & \\
		\hline
		$F_2[\alpha]$ & 67 & 5 & 30.9G & 19.5h & \textbf{4M} & $<$\textbf{10ms} & \cite{duval2018mds}\\
		$F_2[\alpha]$ & 68 & 5 & 24.3G & 2.3h & \textbf{4M} & $<$\textbf{10ms} & \cite{duval2018mds}\\
		$F_2[\alpha]$ & 69 & 4 & 274G & 30.2h & \textbf{4M} & $<$\textbf{10ms} & \cite{duval2018mds}\\
		\hline
	\end{tabular}
	\begin{tablenotes}
		\item \qquad $\alpha$ is the companion matrix of $x^8+x^2+1$.
	\end{tablenotes}
\end{threeparttable}
\end{center}

As for higher-order MDS matrices, we successfully constructs $5\times 5$ and $6\times 6$ matrices for 8-bit inputs with 114 and 148 XOR gates respectively, which are the best results up to date, see Table \ref{tab:3}.

\begin{center}
\begin{threeparttable}[H]

	\caption{The XOR gates required to implement MDS matrices of different orders}
	
		\label{tab:3}       
		\begin{tabular}{cccc}
			\hline\noalign{\smallskip}
			Size of matrix & Previous results & Our results & Ref. \\
			\noalign{\smallskip}\hline\noalign{\smallskip}
			5 & 129  & 114 & \cite{Ke2019Exhaustive} \\
			6 & 156 & 148 & \cite{sajadieh2021construction} \\
			\noalign{\smallskip}\hline
		\end{tabular}
	
\end{threeparttable}
\end{center}

\subsection{Organization of the paper}

\par In Section 2, we introduce several essential conceptions that serve as the foundation for our subsequent discussions. In Section 3, we propose a novel algorithm aimed at efficiently searching for ultra-light $4\times 4$ MDS (Maximum Distance Separable) matrices and involutory MDS matrices. In Section 4, we extend our focus to the construction of higher-order MDS matrices, especially for $5\times 5$ and $6\times 6$ MDS matrices. Finally, in Section 5 we summarize our key findings and conclusions.

\section{Preliminaries}

Since linear functions defined over vector spaces can be equivalently expressed as matrix-vector multiplications, we restrict our attention to the corresponding matrix representations in the following discussion. Let $n$ and $k$ be positive integers, and let $F_2$ denote the binary field consisting of the elements 0 and 1.

In the context of SPN ciphers, linear functions often operate on the concatenated outputs of multiple S-boxes. Here, we specifically consider a $k\times k$ matrix, whose entries are $n\times n$ invertible matrices over $F_2$. In this setting, $k$ represents the number of S-boxes, while n represents the size of a single S-box. Alternatively, this structure can be interpreted as a larger $nk\times nk$ matrix defined over $F_2$. For simplicity and consistency, we further constrain these $n\times n$ matrices to belong to a commutative ring $R_n$ that is a subset of $GL(n,2)$ - the general linear group of $n\times n$ invertible matrices over the binary field $F_2$. This restriction applies throughout the remainder of this paper. As a concrete example, one might consider the case where $R_n=F_2[\alpha]$ and $\alpha$ is the companion matrix of a polynomial of degree $n$. We denote the set of all $k\times k$ invertible matrices with entries in $R_n$ by $M_k(R_n)$.

\subsection{MDS matrix}

The branch number is a significant cryptographic parameter used to measure the diffusion properties of linear functions. Given that any arbitrary linear function $L$ can be represented in matrix form, i.e., there exists a matrix $M$ such that $L(x) = Mx$, the concept of the branch number can be extended to any matrix. In the following, we introduce the notion of the branch number for matrices.

\begin{definition}
	Let $M \in M_k(R_n)$ be a given matrix. Its differential and linear branch numbers are defined respectively as:	
	$$
	B_d(M) = \min_{x \neq 0}\{\text{wt}(x) + \text{wt}(Mx)\} \text{ and } B_l(M) = \min_{x \neq 0}\{\text{wt}(x) + \text{wt}(M^Tx)\},
	$$
	where $\text{wt}(x)$ denotes the weight of $x$, i.e., the number of nonzero components in $x$, and $M^T$ represents the transpose of $M$.
\end{definition}

It is evident that for any matrix $M$ in $M_k(R_n)$, its differential and linear branch numbers are at most $k+1$. If either of these branch numbers equals $k+1$, the matrix is designated as an MDS matrix. Notably, if the differential branch number of a matrix achieves optimality, its linear branch number does so as well, and vice versa.

For a given matrix $M$, the following theorem provides a necessary and sufficient condition for $M$ to be an MDS matrix.

\begin{theorem}[\cite{Au2013Exhaustivesearch}]
	Let $R_n$ be a commutative ring and $M$ be a $k \times k$ matrix over $R_n$. Then $M$ is an MDS matrix if and only if all of its minors are invertible.
\end{theorem}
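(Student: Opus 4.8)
The plan is to translate the combinatorial MDS condition on weights into a purely linear-algebraic condition on the square submatrices of $M$, and then to characterise the nonsingularity of those submatrices through their ring determinants. Throughout I regard an input $x$ as a vector of $k$ words in $F_2^n$, so that $M$ acts block-wise and $Mx\in(F_2^n)^k$; writing $\operatorname{wt}$ for the number of nonzero words, and recalling that $B_d(M)\le k+1$ always, the matrix $M$ is MDS precisely when $\operatorname{wt}(x)+\operatorname{wt}(Mx)\ge k+1$ for every $x\neq 0$. For row indices $T$ and column indices $S$ with $|T|=|S|=a$ I write $M_{T,S}$ for the corresponding $a\times a$ submatrix over $R_n$, and I use that $R_n$ is a \emph{finite commutative} ring: a unit is obviously an invertible matrix, while conversely an element that is nonsingular as an $n\times n$ matrix is a non-zero-divisor, hence a unit. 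Thus ``invertible in $R_n$'' and ``nonsingular over $F_2$'' coincide for minors.

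First I would establish the bookkeeping equivalence: $M$ fails to be MDS iff some square submatrix $M_{T,S}$ kills a nonzero word-vector. For the forward direction, take $x\neq 0$ with $\operatorname{wt}(x)+\operatorname{wt}(Mx)\le k$, set $S=\operatorname{supp}(x)$ and $a=|S|$, and let $Z$ be the coordinate set on which $Mx$ vanishes; since $|Z|\ge k-\operatorname{wt}(Mx)\ge a$, I may pick $T\subseteq Z$ with $|T|=a$, and then $M_{T,S}\,x_S=0$ with $x_S\neq 0$. Conversely, if $M_{T,S}v=0$ for some $0\neq v\in(F_2^n)^a$ with $|T|=|S|=a$, I extend $v$ by zeros outside $S$ to a vector $x$; then $\operatorname{wt}(x)\le a$ while $Mx$ vanishes on all of $T$, so $\operatorname{wt}(Mx)\le k-a$ and hence $\operatorname{wt}(x)+\operatorname{wt}(Mx)\le k$. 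This step is elementary pigeonholing on supports.

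The heart of the argument is the second equivalence: $M_{T,S}$ has trivial word-kernel iff its minor $\det_{R_n}(M_{T,S})$ is invertible in $R_n$. The easy direction uses the adjugate: since $R_n$ is commutative, $\operatorname{adj}(N)\,N=\det_{R_n}(N)\,I_a$ holds over $R_n$, and applying it to a kernel vector $v$ gives $\det_{R_n}(N)\,v=0$, which forces $v=0$ once the minor is a unit (an invertible matrix). The reverse direction is the main obstacle, because over a ring a non-unit determinant need not exhibit a visible kernel vector. I would close it with the classical commuting-block determinant identity
\[
\det\nolimits_{F_2}(N)=\det\nolimits_{F_2}\bigl(\det\nolimits_{R_n}(N)\bigr),
\]
where on the left $N=M_{T,S}$ is read as an $an\times an$ matrix over $F_2$ and on the right $\det_{R_n}(N)\in R_n\subseteq M_n(F_2)$. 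Granting this, a non-unit minor is a singular matrix, so the left-hand determinant vanishes, $N$ is singular over $F_2$, and $N$ acquires a nonzero word-kernel, completing the contrapositive. Chaining the two equivalences, together with the fact noted above that a minor is a unit in $R_n$ iff it is nonsingular, yields the theorem.

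The identity itself is where commutativity of $R_n$ is indispensable, and proving it is the step I expect to require the most care. I would prove it by passing to $\overline{F_2}$ and simultaneously upper-triangularising the commuting family $\{(M_{T,S})_{ij}\}\subseteq R_n$ by a single conjugation $P$. Reading off the $t$-th diagonal entry is then a ring homomorphism $\chi_t\colon R_n\to\overline{F_2}$ for $t=1,\dots,n$; conjugating the big matrix $N$ by $P^{\oplus a}$ and regrouping coordinates by $t$ exhibits it as block-triangular with diagonal blocks $\bigl(\chi_t((M_{T,S})_{ij})\bigr)_{i,j}$, whence $\det_{F_2}(N)=\prod_t\det\bigl(\chi_t(\,\cdot\,)\bigr)=\prod_t\chi_t(\det_{R_n}(N))=\det_{F_2}(\det_{R_n}(N))$. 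Alternatively one may simply cite this determinant formula. With it in hand, the two equivalences assemble into the stated necessary-and-sufficient condition.
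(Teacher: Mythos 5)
The paper never proves this statement: it is imported verbatim from \cite{Au2013Exhaustivesearch}, so there is no internal proof to compare against, and your argument must be judged on its own. It is correct and complete. The support/pigeonhole step faithfully translates failure of the bound $\mathrm{wt}(x)+\mathrm{wt}(Mx)\ge k+1$ into the existence of a square submatrix $M_{T,S}$ annihilating a nonzero word-vector, in both directions. The adjugate identity $\operatorname{adj}(N)N=\det_{R_n}(N)I_a$ (valid precisely because $R_n$ is commutative) correctly gives that an invertible minor forces a trivial word-kernel, and the commuting-block identity $\det_{F_2}(N)=\det_{F_2}\bigl(\det_{R_n}(N)\bigr)$ correctly gives the converse: a singular minor makes $N$ singular as an $an\times an$ binary matrix, hence produces a nonzero word-kernel vector. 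Your proof of that identity is also sound: pairwise-commuting matrices over $\overline{F_2}$ are simultaneously triangularizable, the diagonal-entry maps $\chi_t$ on the triangularized algebra are ring homomorphisms, determinants commute with ring homomorphisms, and regrouping coordinates by the inner index exhibits the conjugated $N$ as block upper triangular with diagonal blocks $\bigl(\chi_t((M_{T,S})_{ij})\bigr)_{i,j}$. Your side remark reconciling the two readings of ``invertible'' (unit of the finite ring $R_n$ versus nonsingular $n\times n$ binary matrix) is also right, and in fact your Step 2 only ever uses nonsingularity, so the theorem holds under either reading. A mild alternative for the hard direction, worth knowing, is McCoy's theorem: over the finite commutative ring $R_n$ a non-unit determinant is a zero divisor, so $M_{T,S}$ has a nonzero kernel vector over $R_n^a$, which can then be specialized to a nonzero word-kernel vector by applying it to a suitable $w\in F_2^n$; but your determinant-identity route is equally standard and fully self-contained.
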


Let $\mathbf{e}_i$ denote the $i$-th unit vector, that is, $\mathbf{e}_i = (0,\ldots,0,\stackrel{i}{1},0,\ldots,0)$ with $1$ in the $i$-th position. A \textbf{permutation matrix} is an invertible matrix whose rows consist of unit vectors.

\begin{theorem}
	Let $M$ be an MDS matrix and $P, Q$ be two permutation matrices. Then $M' = PMQ$ is also an MDS matrix.
\end{theorem}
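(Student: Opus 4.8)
The plan is to use the characterization from the preceding theorem: a matrix is MDS if and only if all of its minors are invertible. So the goal reduces to showing that multiplying $M$ on the left and right by permutation matrices preserves the invertibility of every minor. Since $M$ is MDS by assumption, every minor of $M$ is invertible in $R_n$, and I want to transfer this property to $M' = PMQ$.

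First I would recall the effect of permutation matrices on a matrix. Left-multiplication by a permutation matrix $P$ permutes the rows of $M$, and right-multiplication by a permutation matrix $Q$ permutes the columns of $M$. More precisely, if $P$ corresponds to a permutation $\sigma$ and $Q$ corresponds to a permutation $\tau$, then the $(i,j)$-entry of $M' = PMQ$ equals the $(\sigma(i), \tau(j))$-entry of $M$ (up to the exact convention for how a permutation matrix acts, which I would fix carefully at the outset). Thus $M'$ is obtained from $M$ purely by reordering rows and columns.

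Next I would analyze an arbitrary minor of $M'$. Fix a subset $I$ of row indices and a subset $J$ of column indices of equal size. The corresponding submatrix of $M'$ is formed by the entries $M'_{i,j}$ with $i \in I$, $j \in J$; by the previous step these are exactly the entries $M_{\sigma(i), \tau(j)}$. Hence this submatrix of $M'$ equals the submatrix of $M$ indexed by the row set $\sigma(I)$ and the column set $\tau(J)$, but with its rows and columns possibly listed in a permuted order. Reordering the rows or columns of a square matrix only multiplies its determinant by $\pm 1$, and over a commutative ring of characteristic $2$ the sign is irrelevant; in any case a unit times a determinant is a unit precisely when the determinant is. Therefore the minor of $M'$ indexed by $(I,J)$ is a unit multiple of the minor of $M$ indexed by $(\sigma(I), \tau(J))$, which is invertible because $M$ is MDS.

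Consequently every minor of $M'$ is invertible, and by the characterization theorem $M'$ is an MDS matrix, completing the argument. The main obstacle is essentially bookkeeping rather than conceptual depth: I would need to state precisely which permutation each permutation matrix induces and verify that the map $(I,J) \mapsto (\sigma(I), \tau(J))$ sends the collection of all $(\text{row set}, \text{column set})$ pairs of a given size onto itself bijectively, so that ranging over all minors of $M'$ corresponds exactly to ranging over all minors of $M$. I would also want to confirm that the determinant is multiplicative and well-behaved over the commutative ring $R_n$ (which is given), so that the determinant of a product and the determinant of a permuted submatrix behave as in the field case.
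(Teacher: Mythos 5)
Your proof is correct. Note, however, that the paper states this theorem without any proof at all---it is presented as an immediate observation following the minor characterization---so there is no in-paper argument to compare yours against; your write-up fills a gap rather than duplicating or diverging from one. Your route through the characterization ``$M$ is MDS iff all minors are invertible'' is sound: left and right multiplication by permutation matrices reorder rows and columns, so every square submatrix of $M' = PMQ$ is a row- and column-permutation of a corresponding square submatrix of $M$; its determinant therefore differs only by a sign, which is a unit (and is in fact $+1$ here since $R_n$ has characteristic $2$), and the map $(I,J)\mapsto(\sigma(I),\tau(J))$ is a bijection on index pairs, so invertibility of all minors transfers to $M'$. An equally short alternative, arguably closer to the spirit of the paper's definitions, works directly with the branch number: for permutation matrices one has $\text{wt}(Qx)=\text{wt}(x)$ and $\text{wt}(Py)=\text{wt}(y)$, so substituting $y=Qx$ gives $B_d(PMQ)=\min_{y\neq 0}\{\text{wt}(y)+\text{wt}(My)\}=B_d(M)=k+1$, avoiding determinants entirely and not requiring the cited characterization theorem. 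Either argument is complete; yours has the small additional virtue of making explicit why the sign ambiguity in permuted determinants is harmless over an arbitrary commutative ring.
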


For any given MDS matrix $M$, an MDS matrix $M'$ is said to be \emph{equivalent} to $M$ if there exist two permutation matrices $P$ and $Q$ such that $M' = PMQ$, denoted by $M' \sim M$. Since the inverse of a permutation matrix is also a permutation matrix, this equivalence relation is symmetric, meaning if $M' \sim M$, then $M \sim M'$. Let $[M]$ represent the set of all MDS matrices equivalent to $M$. It is straightforward to observe that $[M]$ comprises $(k!)^2$ MDS matrices.

\subsection{Cost and depth of binary matrices}

The differential and linear branch numbers constitute two pivotal cryptographic indicators of a matrix. When a matrix realized as a circuit, its depth and cost emerge as two additional essential parameters. Subsequently, we study the circuit realization of a general binary matrix, emphasizing its depth and cost implications.

\begin{definition}
	Consider any nonzero binary $m\times n$ matrix $M$. An implementation $I$ of $M$ refers to a circuit constructed solely using XOR gates that can compute $y=Mx$ for any input vector $x$, where $x = (x_1,x_2,\ldots,x_n)\in F_2^n$ and $y = (y_1,y_2,\ldots,y_m)\in F_2^m$.
\end{definition}

As an illustrative example, let us consider a $4\times4$ binary matrix
\begin{equation*}
	M=\begin{bmatrix}
		1 & 0 & 0 & 0\\
		1 & 1 & 0 & 0\\
		1 & 1 & 1 & 0\\
		1 & 1 & 1 & 1
	\end{bmatrix}.
\end{equation*}
Its implementation $I$ can be expressed as:
\begin{align*}
	\begin{cases}
		y_1 &= x_1, \\
		y_2 &= x_1 \oplus x_2, \\
		y_3 &= (x_1 \oplus x_2) \oplus x_3, \\
		y_4 &= (x_1 \oplus x_2) \oplus (x_3 \oplus x_4).
	\end{cases}
\end{align*}

Figure \ref{fig:ibm} depicts the structure of $I$, where intermediate terms $x_5 = x_1 \oplus x_2$ and $x_7 = x_3 \oplus x_4$ are introduced for clarity. Since $I$ comprises solely of XOR gates, it gives rise to a binary tree $B_i$ rooted at $y_i$ for each $y_i$ ($1 \leq i \leq m$), known as the spanning tree of $y_i$. In $B_i$, nodes other than the root $y_i$ are denoted as descendants of $y_i$.

\begin{center}
\begin{figure}[htbp]
	\includegraphics[height=4cm, width=12cm]{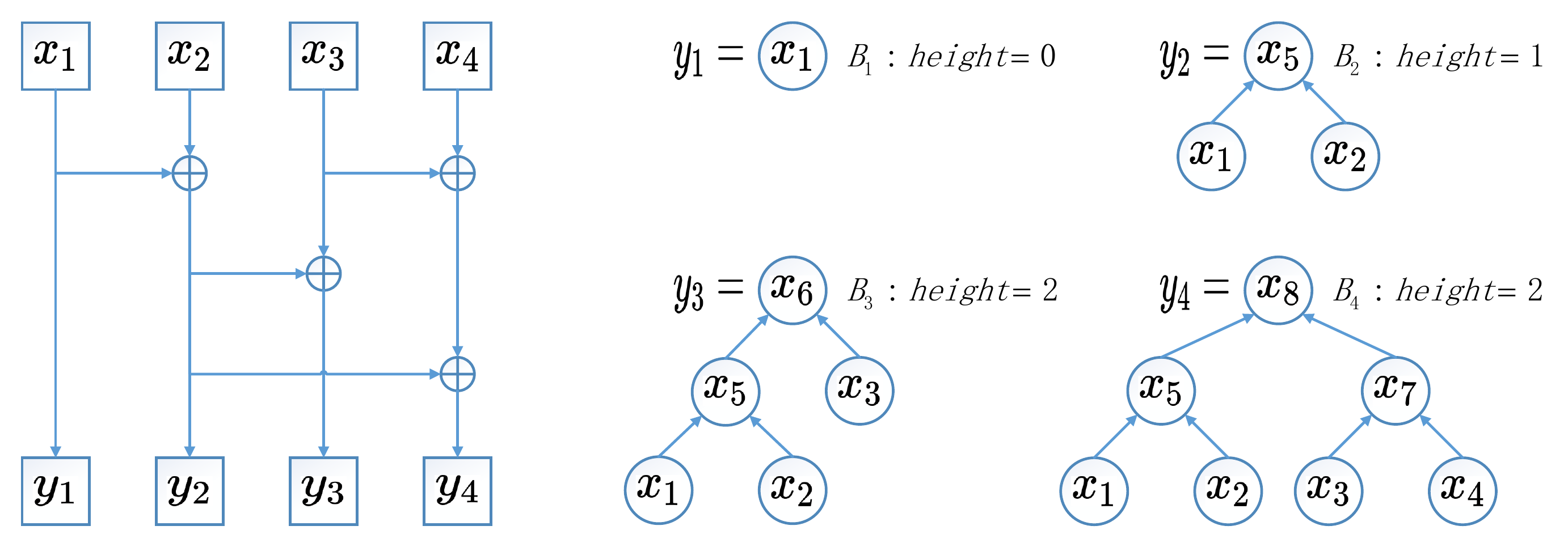}
	\caption{An implementation $I$ of $M$}
	\label{fig:ibm}
\end{figure}
\end{center}

\begin{definition}
	For a given binary matrix $M$ with an implementation $I$, the cost of $I$ is defined as the total count of XOR gates employed in $I$.
\end{definition}

\begin{definition}
	Given a binary $m\times n$ matrix $M$ and its implementation $I$, the depth of $I$ is determined as the maximum height attained among all the trees $B_i$ rooted at $y_i$ ($1 \leq i \leq m$).
\end{definition}

Considering the aforementioned $4\times4$ binary matrix $M$ and its implementation $I$, we observe that the depth and cost of $I$ are 2 and 4, respectively. Notably, a binary matrix $M$ may admit multiple distinct implementations, each potentially exhibiting different costs and depths. For instance, Figure \ref{fig:ibm2} presents an alternative implementation $I'$ of the same $4\times4$ matrix $M$, with a depth and cost of 3. In our work, we prioritize implementations with the lowest cost for any given matrix $M$.

\begin{center}
\begin{figure}[htbp]
	\includegraphics[height=4.5cm, width=12cm]{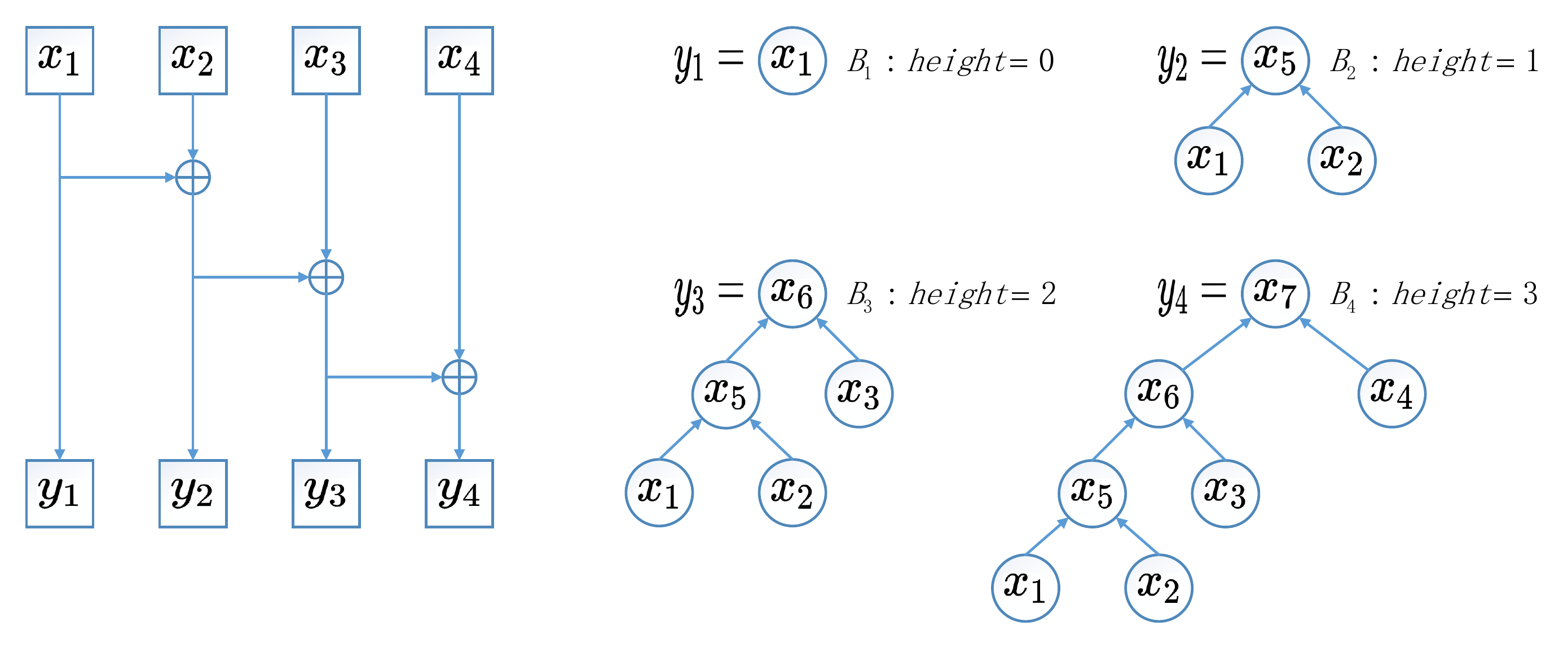}
	\caption{Another implementation $I'$ of $M$}
	\label{fig:ibm2}
\end{figure}
\end{center}

Let $M$ and $M'$ be matrices related by $M' = PMQ$, where $P$ and $Q$ are permutation matrices, and let $I$ be an implementation of $M$. An implementation $I'$ of $M'$ is said to be a corresponding implementation of $I$ if it is obtained solely by rearranging the input and output variables in accordance with $P$ and $Q$. Thus we have the following conclusion.

\begin{theorem}\label{theorem:3}
	Given equivalent matrices $M$ and $M'$, with $I$ being an implementation of $M$ and $I'$ being the corresponding implementation of $I$ for $M'$, it follows that $I$ and $I'$ have identical cost and depth.
\end{theorem}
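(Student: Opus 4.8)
The plan is to argue directly from the definitions of cost and depth, showing that the passage from $I$ to $I'$ is a relabeling that preserves the entire XOR-gate structure of the circuit. First I would unpack what the correspondence $M' = PMQ$ does at the level of the linear system $y = Mx$. Writing $x' = Q^{-1}x$ and $y' = Py$, the equation $y' = M'x'$ becomes $Py = PMQ\,Q^{-1}x = PMx = P\,Mx$, so that computing $M'x'$ is exactly computing $Mx$ with the input coordinates permuted by $Q$ and the output coordinates permuted by $P$. Since $P$ and $Q$ are permutation matrices (each row a unit vector), this amounts purely to a renaming of the variables $x_1,\ldots,x_n$ and $y_1,\ldots,y_m$; no XOR gate is added, removed, or altered.

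Next I would make the notion of ``corresponding implementation'' explicit as a bijection on gates. By definition, $I'$ is obtained from $I$ solely by rearranging input and output variables according to $P$ and $Q$. Concretely, every intermediate term of $I$ (each node of the binary trees $B_i$, including the introduced auxiliaries such as $x_5 = x_1 \oplus x_2$) is carried over verbatim to $I'$, with only its leaf labels (the input variables) and its root label (the output variable) renamed. This gives a gate-for-gate bijection $\varphi$ between the XOR gates of $I$ and those of $I'$ which respects the tree structure: the spanning tree $B'_{j}$ rooted at $y'_{j}$ in $I'$ is isomorphic as a rooted binary tree to the spanning tree $B_i$ in $I$, where $i$ and $j$ are matched by the permutation $P$.

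With the bijection in hand the two claims are immediate. For the cost: since $\varphi$ is a bijection between the XOR-gate sets of $I$ and $I'$, the two circuits contain the same number of XOR gates, so by the cost definition they have equal cost. For the depth: each spanning tree $B'_{j}$ of $I'$ is isomorphic to the corresponding $B_i$ of $I$ and hence has the same height; taking the maximum height over all output trees therefore yields the same value for $I'$ as for $I$, so by the depth definition the depths coincide.

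I do not expect a serious obstacle here, as the result is essentially a bookkeeping statement. The one point requiring care is to verify that permuting inputs and outputs genuinely induces a rooted-tree isomorphism rather than merely preserving gate counts: relabeling the leaves of each $B_i$ according to $Q$ and matching roots according to $P$ must be checked to leave the internal XOR structure intact, which holds precisely because $P$ and $Q$ have unit-vector rows and so act as pure coordinate permutations without mixing variables. Once this is confirmed, the equality of both cost and depth follows directly from their respective definitions.
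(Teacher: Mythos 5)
Your proposal is correct and matches the paper's treatment: the paper states this theorem as an immediate consequence of the definition of a corresponding implementation (it offers no separate proof beyond noting that $I'$ differs from $I$ only in the rearrangement of input and output variables), and your argument is precisely that relabeling-preserves-structure reasoning, spelled out via the gate bijection and rooted-tree isomorphism. Nothing in your write-up deviates from or adds assumptions beyond what the paper's definitions already provide.
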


By Theorem \ref{theorem:3}, it becomes evident that all matrices belonging to the same equivalence class share a common cost and depth. Therefore, in our search for ultra-light MDS matrices, we can confine our attention to the distinct equivalence classes.

\subsection{Metric}

In this section, we discuss the method to compute the cost of a $k \times k$ matrix $M$ over $R_n$:
\begin{equation*}
	\begin{bmatrix}
		y_{1} \\
		y_{2} \\
		\vdots \\
		y_{k}
	\end{bmatrix}
	=
	\begin{bmatrix}
		a_{11} & a_{12} & \cdots & a_{1k} \\
		a_{21} & a_{22} & \cdots & a_{2k} \\
		\vdots & \vdots & \ddots & \vdots \\
		a_{k1} & a_{k2} & \cdots & a_{kk}
	\end{bmatrix}
	\begin{bmatrix}
		x_{1} \\
		x_{2} \\
		\vdots \\
		x_{k}
	\end{bmatrix},
\end{equation*}
where $x_i, y_i \in F_2^n$, $a_{ij} \in R_n$, $1 \leq i, j \leq k$. Here, we only consider block implementations of $M$, treating each $x_i$ and $y_i$ as individual blocks.

For a given matrix $M$ and its block implementation $I$, two fundamental operations exist:
\begin{enumerate}
	\item Word-wise XOR: $X \gets X_1 \oplus X_2$,
	\item Scalar multiplication over $R_n$: $X \gets aX$,
\end{enumerate}

where $X, X_1, X_2$ are $n$-dimensional vectors over $F_2$ and $a \in R_n$. Word-wise XOR requires $n$ XOR gates; the cost of scalar multiplication $aX$ over $R_n$ depends on the specific value of $a$.

Any block implementation $I$ of matrix $M$ can be expressed as a linear straight-line program, as exemplified below.

\begin{example}\label{exg:1}
	\begin{equation*}
		\begin{bmatrix}
			y_1 \\
			y_2 \\
			y_3 \\
			y_4
		\end{bmatrix}
		=
		\begin{bmatrix}
			1 & \alpha & 1 & 0 \\
			1 & 0 & 1 & 1 \\
			0 & \alpha & 1 & 1 \\
			1 & 1 & 1 & 0
		\end{bmatrix}
		\begin{bmatrix}
			x_1 \\
			x_2 \\
			x_3 \\
			x_4
		\end{bmatrix},
	\end{equation*}
	where $x_i, y_i \in F_2^8$ $(1 \leq i \leq 4)$, and $\alpha$ is the companion matrix of $x^8 + x^2 + 1$:
	
	\begin{small}
		\begin{equation*}
			\alpha =
			\begin{bmatrix}
				0 & 0 & 0 & 0 & 0 & 0 & 0 & 1 \\
				1 & 0 & 0 & 0 & 0 & 0 & 0 & 0 \\
				0 & 1 & 0 & 0 & 0 & 0 & 0 & 1 \\
				0 & 0 & 1 & 0 & 0 & 0 & 0 & 0 \\
				0 & 0 & 0 & 1 & 0 & 0 & 0 & 0 \\
				0 & 0 & 0 & 0 & 1 & 0 & 0 & 0 \\
				0 & 0 & 0 & 0 & 0 & 1 & 0 & 0 \\
				0 & 0 & 0 & 0 & 0 & 0 & 1 & 0 \\
			\end{bmatrix}.
		\end{equation*}
	\end{small}
	
One implementation of $M$ is given by:
\begin{align*}
	\begin{cases}
		y_1 &= (x_1 \oplus x_3) \oplus \alpha x_2, \\
		y_2 &= (x_1 \oplus x_3) \oplus x_4, \\
		y_3 &= (x_3 \oplus x_4) \oplus \alpha x_2, \\
		y_4 &= (x_1 \oplus x_3) \oplus x_2.
	\end{cases}
\end{align*}
This can be rearranged into a linear straight-line program $I$ as follows:
\begin{align*}
	x_5 &= x_1 \oplus x_3, \\
	x_6 &= x_3 \oplus x_4, \\
	x_7 &= x_5 \oplus \alpha x_2 = y_1, \\
	x_8 &= x_5 \oplus x_4 = y_2, \\
	x_9 &= x_6 \oplus \alpha x_2 = y_3, \\
	x_{10} &= x_5 \oplus x_2 = y_4.
\end{align*}

Note that each step of $I$ has the form:
$$
X = a \cdot X_1 \oplus b \cdot X_2,
$$
where $X_1, X_2$ are either intermediate results $x_i(i\ge 5)$ calculated previously or input variables $x_i$ $(1 \leq i \leq 4)$.

Two distinct types of operations are observed in this form: word-wise XOR, denoted as \textbf{word-xor}(expressed as \textbf{XOR}) to distinguish it from bit-wise xor(expressed as \textbf{xor}), and scalar multiplication, which refers to matrix-vector multiplication over $R_n$.

To compute the cost of $I$, we proceed as follows:

First, the linear straight-line program comprises 6 steps, yielding an XOR cost of $6 \times 8 = 48$.

Second, considering scalar multiplication, the non-trivial multiplication in $I$ is $\alpha x_2$, which appears twice. Since the latter occurrence of $\alpha x_2$ can reuse the result of the first, only one calculation is needed when assessing cost. Therefore, the cost of scalar multiplication is 1.

In conclusion, the total cost of $I$ is $48 + 1 = 49$ xor operations.
\end{example}

From the above example, it is evident that when determining the cost of an implementation $I$ of a $k \times k$ matrix $M$ over $R_n$, we follow a two-step process. Firstly, we ascertain the number of steps $s$ in $I$ as a linear straight-line program, dictating the number of XORs necessary in $I$. Secondly, we identify all unique and non-trivial scalar multiplications in $I$ and tally the total number of xors $t$ required for these scalar multiplications. Ultimately, the cost of $I$ is given by $ns + t$.

\section{The construction of lightweight $4\times4$ MDS matrices}

In this section, we mainly focus on the construction of the $4\times 4$ MDS matrix. Because of its small size, we can discuss it in more details. First, we will introduce the concept of implementation tree.

\subsection{Implementation tree}

In Example \ref{exg:1}, we observe that under the word-based structure, a matrix implementation can be conceptualized as a linear straight-line program. Each such program uniquely determines a matrix. The lightweight MDS matrix can be constructed by searching for the linear straight-line program with the lowest cost capable of implementing an MDS matrix. Next, we investigate the properties of these linear straight-line programs.

Consider a linear straight-line program realizing a matrix:
\[ t_1 = a_1 t_{1,1} \oplus b_1 t_{1,2}, t_2 = a_2 t_{2,1} \oplus b_2 t_{2,2}, \ldots, t_m = a_m t_{m,1} \oplus b_m t_{m,2}, \]
where $t_{i,j}$ ($1 \leq i \leq m, 1 \leq j \leq 2$) are either initial variables $x_p$ or intermediate results $t_q$ ($1 \leq q < i$). We designate $t_q$ ($1 \leq q \leq m$) as a \textbf{term} of the program, and the number of terms $m$ as its \textbf{length}.

A partial order relation can be established among the terms $t_1, \ldots, t_m$: if $t_p = a_pt_q \oplus b_pt_{p,2}$ or $t_p = a_pt_{p,1} \oplus b_pt_q$ appears in the program, then $t_q \prec t_p$. Furthermore, if $t_q \prec t_p$ and $t_r \prec t_q$, then $t_r \prec t_p$. For instance, if $t_5 = at_3 + bt_4$ and $t_4 = ct_1 + dt_2$, then $t_i \prec t_5$ for $1 \leq i \leq 4$. The relation $t_q \prec t_p$ signifies that $t_q$ is an essential component of $t_p$.

Given the partial order relationship, suppose the following linear straight-line program can implement some $k \times k$ matrix on $R_n$, with $a_pt_{p,1} \oplus b_pt_{p,2} = t_p$ where $a_p \neq 0$ and $b_p \neq 0$:

\begin{align*}
	a_1t_{1,1} \oplus b_1t_{1,2} &= t_1, \\
	&\vdots \\
	a_{i_1}t_{i_1,1} \oplus b_{i_1}t_{i_1,2} &= t_{i_1} = y_1, \\
	a_{i_1+1}t_{i_1+1,1} \oplus b_{i_1+1}t_{i_1+1,2} &= t_{i_1+1}, \\
	&\vdots \\
	a_{i_2}t_{i_2,1} \oplus b_{i_2}t_{i_2,2} &= t_{i_2} = y_2, \\
	&\vdots \\
	a_{i_k}t_{i_k,1} \oplus b_{i_k}t_{i_k,2} &= t_{i_k} = y_k.
\end{align*}

This program exhibits the following properties:

\begin{enumerate}
	\item If $t_p$ is not an output (i.e., $t_p \neq y_i$ for all $1 \leq i \leq k$), then $t_p \prec y_i$ for some $1 \leq i \leq k$.
	\item If $1 \leq q < p \leq i_k$, then either $t_q \prec t_p$ or $t_q$ and $t_p$ are incomparable. If $t_q$ and $t_{p}$ are incomparable, their order in the program can be swapped.
\end{enumerate}

Let us elaborate on these properties:

\begin{enumerate}
	\item The partial order relation $t_b \prec t_a$ indicates that $t_b$ is necessary for computing $t_a$, that is to say, $t_a$ cannot be computed before $t_b$. Conversely, if $t_b \not\prec t_a$, whether $t_b$ has been generated is irrelevant to the computation of $t_a$. For Property 1, if there exists a non-output $t_p$ such that $t_p \not\prec y_i$ for all $1 \leq i \leq k$, this implies that $t_p$ is not required for computing any outputs $y_i$ ($1 \leq i \leq k$). Therefore, removing $t_p$ from the matrix implementation would not affect its functionality, and we can reduce the program's length and simplify the implementation. Thus we always assume that no such a $t_p$ exists in a matrix implementation.
	
	\item Consider $t_q$ and $t_p$ ($1 \leq q < p \leq i_k$). If $t_p \prec t_q$, it means that $t_p$ must be used to compute $t_q$. However, since $t_p$ is generated after $t_q$, this leads to a contradiction. Therefore, their relationship can only be $t_q \prec t_p$ or they are incomparable. If $t_p$ and $t_{p+1}$ are incomparable, swapping their positions in the program has no impact on the overall implementation. This is because there are no terms between $t_p$ and $t_{p+1}$, so the swap does not affect the generation of these two terms or any subsequent terms. However, it should be noted that if $t_p$ and $t_{p+1}$ are incomparable and $p - q \geq 2$, such a swap may not be feasible. For example, if there exists $q < s < p$ such that $t_s \prec t_p$, swapping $t_q$ and $t_p$ would place $t_s$ after $t_p$, preventing the generation of $t_p$ and leading to a contradiction.
\end{enumerate}

For brevity, we represent the above linear straight-line program as $[t_1, \ldots, t_{i_1} = y_1, \ldots, t_{i_k} = y_k]$. Applying properties 1 and 2, we arrive at the following theorem:

\begin{theorem}\label{th:3-1}
	For any given $k \times k$ MDS matrix $M$ on $R_n$, an implementation $I = [t_1, \ldots, t_{i_1} = y_1, \ldots, t_{i_k} = y_k]$ of $M$ can be rearranged to $I' = [t_1', \ldots, t_{i_1'}' = y_1, \ldots, t_{i_k}' = y_k]$ by swapping the relative positions of $t_p$ ($1 \leq p \leq i_k$) such that $I'$ satisfies:
	\begin{align*}
		t_1' &\prec y_1, \ldots, t_{i_1'-1}' \prec y_1, \\
		t_{i_1'+1}' &\prec y_2, \ldots, t_{i_2'-1}' \prec y_2, \\
		&\vdots \\
		t_{i_{k-1}'+1}' &\prec y_k, \ldots, t_{i_k-1}' \prec y_k.
	\end{align*}
\end{theorem}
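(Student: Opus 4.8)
The plan is to reorganise the terms of $I$ according to which output each term is ``responsible'' for, and then to argue that the resulting order is a legitimate linear straight-line program reachable from $I$ by the swaps permitted in Property 2. Concretely, for each term $t_p$ I would define $\phi(p) = \min\{\, j : t_p \prec y_j \,\}$, with the convention $\phi(y_j)=j$ for the outputs themselves. Property 1 guarantees that every non-output term precedes at least one output, so this minimum is taken over a non-empty set and $\phi$ is well defined on all terms. The map $\phi$ partitions the terms into blocks $B_1,\dots,B_k$, where $B_j$ consists of $y_j$ together with all non-output terms $t_p$ whose first (smallest-indexed) output is $y_j$. The target program $I'$ lists $B_1\setminus\{y_1\}$ first (in an order compatible with $\prec$), then $y_1$, then $B_2\setminus\{y_2\}$, then $y_2$, and so on, placing each $y_j$ at the end of its block.

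The crucial step is to show that this block order is a valid linear extension of $\prec$, i.e. that $t_q \prec t_p$ forces $t_q$ to appear before $t_p$ in $I'$. This reduces to the \emph{monotonicity} of $\phi$: if $t_q \prec t_p$ and $\phi(p)=j$, then $t_p \prec y_j$, so transitivity of $\prec$ gives $t_q \prec y_j$, whence $\phi(q)\le j=\phi(p)$. Thus a dependency never points from a later block to an earlier one, and within a single block I order terms by any linear extension of $\prec$, so intra-block dependencies are respected as well. Placing $y_j$ last in $B_j$ is consistent because every non-output of $B_j$ precedes $y_j$ by construction, and $y_j$ cannot precede any of them (that would create a cycle in $\prec$). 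It remains to note that, being a matrix output, $y_j$ may still precede terms of later blocks, which is harmless since those blocks follow $B_j$.

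Having exhibited $I'$ as a linear extension of the same poset as $I$, I would invoke Property 2 together with the standard fact that any two linear extensions of a finite partial order differ by a finite sequence of transpositions of adjacent incomparable elements, each of which leaves the computed matrix unchanged. Hence $I'$ is obtained from $I$ purely by the admissible reorderings, it still computes $M$, and by construction it has exactly the block shape demanded by the theorem: every term strictly preceding $y_1$ satisfies $t \prec y_1$, every term strictly between two consecutive outputs $y_{j-1}$ and $y_j$ satisfies $t \prec y_j$, and so forth.

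I expect the main obstacle to be the monotonicity argument combined with the correct treatment of outputs that are themselves reused as intermediate results for later outputs. One must verify that assigning each term to the \emph{first} output it precedes never conflicts with the dependency order, and that an output $y_j$ can safely be fixed at the boundary of its block even though it may feed into computations belonging to $B_{j+1},\dots,B_k$; the acyclicity of $\prec$ and the index ordering $i_1<\cdots<i_k$ of the outputs are precisely what make this consistent.
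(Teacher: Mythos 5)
Your proposal is correct, but it is organized quite differently from the paper's own argument. The paper proceeds algorithmically and in place: working segment by segment, it scans downward from $y_1$ for the first term $t_{j_1}$ incomparable with $y_1$, observes that $t_{j_1}$ must then be incomparable with \emph{every} term between itself and $y_1$ (otherwise transitivity would force $t_{j_1} \prec y_1$), bubbles it past $y_1$ by successive adjacent swaps licensed by Property 2, and repeats; the last segment then needs no work, since Property 1 forces every leftover term to satisfy $t_p \prec y_k$. You instead build the target order globally: the priority function $\phi(p)=\min\{j : t_p \prec y_j\}$, its monotonicity under $\prec$ (your key lemma, which the paper never states explicitly), and the standard fact that any two linear extensions of a finite poset are connected by adjacent transpositions of incomparable elements. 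The two routes in fact land on the same arrangement — in the paper's procedure each term settles precisely in the segment of the first output it precedes, i.e.\ in your block $B_{\phi(p)}$ — but your version isolates the invariant that makes the paper's bubble sort terminate correctly, at the price of invoking an external combinatorial theorem as a black box, whereas the paper's hands-on swap argument is self-contained and implicitly re-proves the special case of that theorem it needs. Both proofs rest on the same two pillars: Property 1 to make $\phi$ well defined (equivalently, to settle the final segment) and Property 2 to justify each adjacent transposition; your careful treatment of outputs reused in later computations (showing $y_j \prec t_p$ forces $\phi(p) > j$) is the analogue of the paper's observation that earlier segments, once adjusted, are never disturbed again.
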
  
\begin{proof}
	Consider the implementation $I$ of $M$, which we divide into $k$ segments as follows:
	\begin{align*}
		t_1,&\cdots,t_{i_1-1},t_{i_1}=y_1,\\
		t_{i_1+1},&\cdots,t_{i_2-1},t_{i_2}=y_2,\\
		&\vdots\\
		t_{i_{k-1}+1}&\cdots,t_{i_k-1},t_{i_k}=y_k.
	\end{align*}
	
	Utilizing Property 2, we commence the adjustment process segment by segment, starting from the first.
	
	In the first segment, we initiate the downward search from $t_{i_1-1}$. Suppose the first term encountered that is incomparable with $y_1$ is $t_{j_1}$. Then, under the partial order relation $\prec$, $t_{j_1}$ is incomparable with $t_{j_1+1},\cdots,t_{i_1}=y_1$. This arises because $t_{j_1}$ is the inaugural term discovered in the descending search that cannot be compared with $y_1$. If there existed some $p$ such that $t_{j_1}\prec t_p$ ($j_1 < p < i_1$) held true, then since $t_p\prec y_1$, we would have $t_{j_1}\prec y_1$?a contradiction. Hence, $t_{j_1}$ is indeed incomparable with $t_{j_1+1},\cdots,t_{i_1}=y_1$. By Property 2, we can successively exchange $t_{j_1}$ upward, akin to bubble sorting, until it is positioned behind $y_1$. This process is repeated until the remaining terms $t_1',\cdots,t_{i_1'-1}'$ satisfy $t_s\prec y_1$ for all $1\le s < i_1'$. Thus, the first segment is appropriately adjusted.
	
	Segments $2$ to $k-1$ are adjusted analogously to the first segment. After the preceding adjustments, any term $t_p$ in segment $k$ necessarily satisfies $t_p\not\prec y_s$ for all $1\le s\le k-1$. By virtue of Property 1, $t_p\prec y_k$ ensures that the last segment naturally conforms to the requirements subsequent to the adjustments of the preceding $k-1$ segments.
	
	After the aforementioned adjustments, $I'$ fulfills the theorem's conditions, thereby the conclusion follows.
\end{proof}

Below we designate $I'$ as a \textbf{normal implementation}. Notably, our adjustments solely alter the order of terms within the linear straight-line program. Therefore they are identical when they are realized as circuits.

To illustrate Theorem \ref{th:3-1}, we employ the matrix and implementation from Example \ref{exg:1}. In this instance, the implementation $I$ takes the following form:
\begin{align*}
	x_1\oplus x_3 &=x_5,\\
	x_3\oplus x_4 &=x_6,\\
	x_5\oplus \alpha x_2 &=x_7=y_1,\\
	x_5\oplus x_4 &=x_8=y_2,\\
	x_6\oplus \alpha x_2&=x_9=y_3,\\
	x_5\oplus x_2 &=x_{10}=y_4.
\end{align*}
The partial order relationships among the terms are as follows:
$$
x_5\prec x_7,x_5\prec x_8,x_5\prec x_{10},x_6\prec x_9.
$$
Adjusting $I$ according to Theorem \ref{th:3-1}: Since $x_6$ is incomparable with $x_7=y_1$, we exchange $x_6$ and $x_7$ and finish the first segment's adjustment. Next, as $x_6$ is incomparable with $x_8$, we swap $x_6$ and $x_8$ in the second segment's adjustment. For the third segment, the relationship $x_6\prec x_9=y_3$ already exists, we do nothing. Similarly, the fourth segment requires no modification.

After renumbering the terms, we obtain $I'$ as follows:
\begin{align*}
	x_1\oplus x_3 &=x_5,\\
	x_5\oplus \alpha x_2 &=x_6=y_1,\\
	x_5\oplus x_4 &=x_7=y_2,\\
	x_3\oplus x_4 &=x_8,\\
	x_8\oplus \alpha x_2&=x_9=y_3,\\
	x_5\oplus x_2 &=x_{10}=y_4.
\end{align*}

The distinction between $I$ and $I'$ lies solely in the ordering of terms, as both ultimately express the same set of relationships:
\begin{align*}
	\begin{cases}
		y_1=(x_1\oplus x_3)\oplus \alpha x_2,\\
		y_2=(x_1\oplus x_3)\oplus x_4,\\
		y_3=(x_3\oplus x_4)\oplus \alpha x_2,\\
		y_4=(x_1\oplus x_3)\oplus x_2.
	\end{cases}
\end{align*}

Based on the preceding discussion, when an MDS matrix is implemented, it suffices to consider a normal implementation akin to the form $I'$ in Theorem \ref{th:3-1}. We call the vector $(i_1', i_2' - i_1', \ldots, i_k' - i_{k-1}')$ as a characteristic of $I'$. Once the order among $y_i$ is fixed, its characteristic remains constant.

To further illustrate the concept, let us consider the MixColumn matrix of the AES algorithm as an example. We introduce the notion of an implementation tree through this example.

\begin{example}\label{exg:2}
	The MixColumn matrix of AES is given by:
	
	\[
	M_{AES} =
	\left[ \begin{array}{cccc}
		\alpha & \alpha+1 & 1 & 1 \\
		1 & \alpha & \alpha+1 & 1 \\
		1 & 1 & \alpha & \alpha+1 \\
		\alpha+1 & 1 & 1 & \alpha
	\end{array} \right],
	\]
	where $\alpha$ denotes the companion matrix of $x^8 + x^4 + x^3 + x + 1$.
	
	An implementation $I$ of $M_{AES}$ can be expressed as follows:
	
	\begin{threeparttable}[H]
		\centering
		\resizebox{130mm}{!}{ 
			\begin{tabular}{cccc}
				$x_5 = 1 \cdot x_1 \oplus 1 \cdot x_2$, & $x_6 = 1 \cdot x_2 \oplus \alpha \cdot x_5$, & $x_7 = 1 \cdot x_3 \oplus 1 \cdot x_4$, & $x_8 = 1 \cdot x_6 \oplus 1 \cdot x_7=y_1$, \\
				$x_9 = 1 \cdot x_2 \oplus 1 \cdot x_3$, & $x_{10} = 1 \cdot x_3 \oplus \alpha \cdot x_9$, & $x_{11} = 1 \cdot x_1 \oplus 1 \cdot x_4$, & $x_{12} = 1 \cdot x_{10} \oplus 1 \cdot x_{11}=y_2$, \\
				$x_{13} = 1 \cdot x_4 \oplus \alpha \cdot x_7$, & $x_{14} = 1 \cdot x_5 \oplus 1 \cdot x_{13}=y_3$, & & \\
				 $x_{15} = 1 \cdot x_1 \oplus \alpha \cdot x_{11}$, & $x_{16} = 1 \cdot x_9 \oplus 1 \cdot x_{15}=y_4$. & & \\
		\end{tabular}}
	\end{threeparttable}
	
	The cost associated with $I$ is $12 \times 8 + 4 \times 3 = 108$. Notably, the cost of XOR operations is 96 and comprises the majority of the total implementation cost. If we abstract away the specific values of scalar multiplications, the implementation $I$ can be generalized as follows:
	
	\begin{threeparttable}[H]
		\centering
		\resizebox{130mm}{!}{
			\begin{tabular}{cccc}
				$x_5 = a_1 \cdot x_1 \oplus a_2 \cdot x_2$, & $x_6 = a_3 \cdot x_2 \oplus a_4 \cdot x_5$, & $x_7 = a_5 \cdot x_3 \oplus a_6 \cdot x_4$, & $x_8 = a_7 \cdot x_6 \oplus a_8 \cdot x_7=y_1$, \\
				$x_9 = a_9 \cdot x_2 \oplus a_{10} \cdot x_3$, & $x_{10} = a_{11} \cdot x_3 \oplus a_{12} \cdot x_9$, & $x_{11} = a_{13} \cdot x_1 \oplus a_{14} \cdot x_4$, & $x_{12} = a_{15} \cdot x_{10} \oplus a_{16} \cdot x_{11}=y_2$, \\
				$x_{13} = a_{17} \cdot x_4 \oplus a_{18} \cdot x_7$, & $x_{14} = a_{19} \cdot x_5 \oplus a_{20} \cdot x_{13}=y_3$, & & \\
				$x_{15} = a_{21} \cdot x_1 \oplus a_{22} \cdot x_{11}$, & $x_{16} = a_{23} \cdot x_9 \oplus a_{24} \cdot x_{15}=y_4$. & & \\
		\end{tabular}}
	\end{threeparttable}
	
	This abstraction highlights the role of XOR operations in the implementation. It can be visually represented using a binary tree, as shown in Figure \ref{fig:3-1}.
	\begin{figure}[h]
		\centering
		\includegraphics[height=5.2cm, width=12cm]{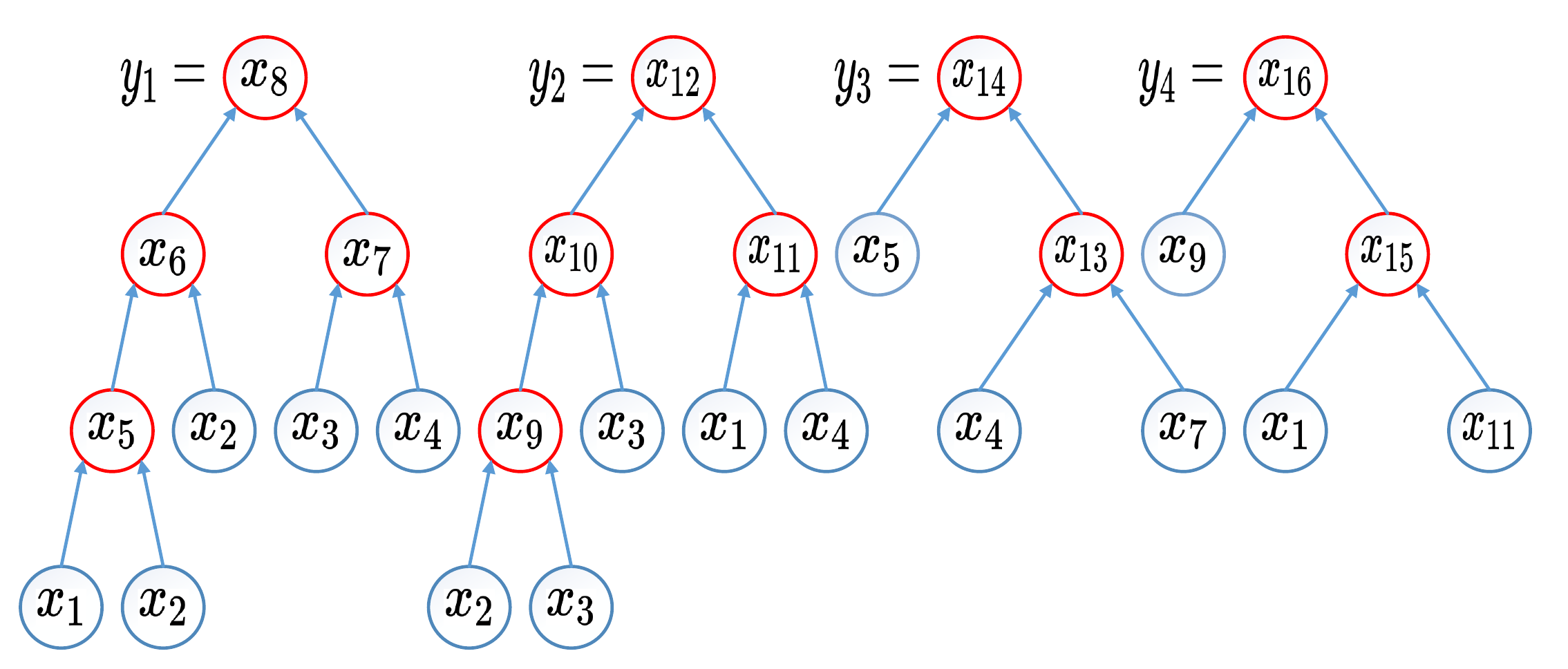}
		\caption{The implementation tree of the MixColumn matrix of AES}
		\label{fig:3-1}
	\end{figure}
	
\end{example}

From Figure \ref{fig:3-1}, we can observe the following:
\begin{enumerate}
	\item The tree is regular, meaning each node is either a leaf node or has two non-empty subtrees.
	\item Each node in the tree is generated from two nodes with smaller serial numbers, defining the order of implementation.
\end{enumerate}

\begin{definition}
	Consider a given $k \times k$ MDS matrix $M$ on $R_n$. Let $I = [t_1, \ldots, t_{i_1} = y_1, \ldots, t_{i_k} = y_k]$ be a normal implementation of $M$. We stipulate that $x_1 = t_0, x_2 = t_{-1}, \ldots, x_k = t_{-(k-1)}$. For any term $t_p = a_p t_m \oplus b_p t_n$ in $I$, it induces a triple $T_p = (m, n, p)$. If $t_s = y_i$, we mark its corresponding triple as $T_s \rightarrow y_i$. By arranging such triples in the order of their appearance in the original implementation $I$, we obtain a sequence $T = [T_1, \ldots, T_{i_1} \rightarrow y_1, \ldots, T_{i_k} \rightarrow y_k]$.
	
	We refer to $T$ as the \textbf{implementation tree} of $M$. Each triple $T_p (1 \leq p \leq i_k)$ is called a \textbf{node} of $T$. The vector $(i_1, i_2 - i_1, \ldots, i_k - i_{k-1})$ is designated as the \textbf{type} of $T$, and $i_k$ is termed the \textbf{capacity} of $T$. Among all $k \times k$ MDS matrices on $R_n$, the implementation tree with the minimum capacity is deemed the \textbf{simplest tree}.
	\label{de:3-1}
\end{definition}

In Definition \ref{de:3-1}, the implementation tree is represented as triples in the form $T_p = (m,n,p)$. Alternatively, it can be expressed using XOR operations with parameters. Let $x_1=T_0, x_2=T_{-1}, \ldots, x_k=T_{-(k-1)}$ and the implementation tree be denoted as $T=[T_1, \ldots, y_1=T_{i_1}, \ldots, y_k=T_{i_k}]$, where $T_p = a_pT_m \oplus b_pT_n$ corresponds to $T_p=(m,n,p)$. Here, $a_p$ and $b_p$ are formal parameters without specific values. Notably, the partial order relationship between terms in $I$ can be seamlessly translated to the nodes in $T$.

We define the implementation tree as follows: matrix $\rightarrow$ matrix implementation $\rightarrow$ implementation tree of matrix. Initially, we have the matrix, from which we derive the word-based implementation of the matrix, and finally extract the concept of the implementation tree. When constructing a matrix, we follow the reverse process: first, we construct an implementation tree, then determine the parameter values for each scalar multiplication to achieve a concrete implementation, and finally derive the matrix from this implementation.

Before presenting the specific algorithm, we estimate the capacity of the MDS matrix implementation tree.

\begin{lemma}\label{le:3-1}
	For a given $k \times k$ MDS matrix $M$ on $R_n$, let $I=[t_1, \ldots, t_{i_1}=y_1, \ldots, t_{i_k}=y_k]$ be an implementation of $M$. When fully expanding $t_{i_p}=y_p$ ($1 \leq p \leq k$) to the input variable $x_q$ ($1 \leq q \leq k$), each $x_q$ must appear.
\end{lemma}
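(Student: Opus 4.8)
The plan is to reduce the combinatorial statement about the expansion of each individual output $y_p$ to a purely algebraic fact about the entries of $M$, and then invoke the MDS characterization. First I would observe that, since every step of $I$ has the form $t = a\,t_1 \oplus b\,t_2$ with $a,b\in R_n$, fully expanding a fixed output $y_p$ down to the input variables yields a formal $R_n$-linear combination $y_p = \bigoplus_{q=1}^{k} c_{pq}\,x_q$, where each coefficient $c_{pq}\in R_n$ is accumulated (added in $R_n$) from the products of scalars encountered along every path from $y_p$ to $x_q$ in the computation. Because $I$ implements $M$, this expanded form must compute the $p$-th component of $Mx$ for every input $x$, i.e. $\bigoplus_q c_{pq} x_q = \bigoplus_q a_{pq} x_q$ as functions of $x$, where $a_{pq}$ denotes the $(p,q)$ entry of $M$. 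Evaluating on inputs that place a basis vector in the $q$-th block and zero elsewhere forces $c_{pq}=a_{pq}$ as elements of $R_n\subseteq M_n(F_2)$. Thus, for the fixed $y_p$, the condition ``$x_q$ appears in the full expansion'' is equivalent to ``$c_{pq}=a_{pq}\neq 0$''.

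Second, I would invoke the MDS characterization, namely the result of~\cite{Au2013Exhaustivesearch} recalled above: since $M$ is MDS, all of its minors are invertible. In particular every $1\times 1$ minor, which is precisely an entry $a_{pq}$, is invertible in $R_n$ and hence nonzero. Combining this with the previous paragraph gives $c_{pq}=a_{pq}\neq 0$ for all $1\le p,q\le k$. This is the crucial point that matches the exact wording of the lemma: the argument is run separately for each fixed output $y_p$, and it shows that \emph{every} input variable $x_q$ occurs with nonzero coefficient in the expansion of \emph{that} $y_p$, not merely that each $x_q$ is used by some output. Equivalently, no row of $M$ contains a zero entry.

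The only delicate point, and the step I expect to be the real obstacle, is justifying that the coefficient produced by the formal expansion over the characteristic-$2$ ring $R_n$ coincides with the matrix entry $a_{pq}$ rather than vanishing through cancellation among several paths reaching $x_q$. This is exactly where the correctness of $I$ is essential: I would argue that the free $R_n$-linear form extracted by expansion is uniquely determined by the function it computes, using that the blocks $x_1,\dots,x_k$ range independently over $F_2^n$ and that distinct elements of $R_n\subseteq M_n(F_2)$ act as distinct $F_2$-linear maps, so that agreement of the two linear forms on all inputs forces equality of coefficients. Once this identification $c_{pq}=a_{pq}$ is secured, the lemma follows immediately from the invertibility of the $1\times 1$ minors, and no further estimates or case analysis are required.
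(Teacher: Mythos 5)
Your proposal is correct and follows essentially the same route as the paper's proof: the paper simply notes that $y_p$ corresponds to row $p$ of $M$, whose entries are all nonzero because $M$ is MDS (every $1\times 1$ minor is invertible), so each $x_q$ must appear. Your additional verification that the expansion coefficients $c_{pq}$ coincide with the entries $a_{pq}$ — ruling out cancellation among paths by using that $I$ computes $Mx$ on all inputs — is a careful elaboration of a step the paper takes for granted, not a different argument.
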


\begin{proof}
	Each $t_{i_p}=y_p$ ($1 \leq p \leq k$) corresponds to row $p$ of $M$. Since $M$ is an MDS matrix, the elements in row $p$ are non-zero. Therefore, each $x_q$ ($1 \leq q \leq k$) must appear in the expansion of $t_{i_p}=y_p$.
\end{proof}

The following theorem provides a lower bound for the capacity of MDS matrix implementations.

\begin{theorem}\label{th:3-2}
	For a given $k \times k$ ($k \geq 3$) MDS matrix $M$ on $R_n$, let $T=[T_1, \ldots, T_{i_1} \rightarrow y_1, \ldots, T_{i_k} \rightarrow y_k]$ be an implementation tree of $M$. Then, $i_k \geq 2k-1$.
\end{theorem}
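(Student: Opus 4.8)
The plan is to bound the two contributions to the capacity $i_k$ separately: the size $i_1$ of the first segment (which, in a normal implementation, consists exactly of $y_1$ together with all of its predecessors), and the number of terms contributed by the remaining outputs $y_2,\dots,y_k$. Throughout I would work with the normal implementation guaranteed by Theorem \ref{th:3-1}, so that each later output $y_p$ lives in its own segment and the first segment is precisely $\{y_1\}$ together with the set of terms $\prec y_1$.

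First I would establish the easy bound $i_1 \ge k-1$ by a counting argument on the dependency graph. Consider the (sub)DAG whose vertices are the $k$ input variables and the $i_1$ terms of the first segment, with an edge from each operand to the term it helps build; every computed term has in-degree $2$, so there are exactly $2i_1$ edges, and since every such term is a predecessor of $y_1$ the underlying undirected graph is connected. By Lemma \ref{le:3-1} all $k$ inputs occur, so the graph has $i_1+k$ vertices, and connectivity forces $2i_1 \ge (i_1+k)-1$, i.e. $i_1 \ge k-1$. Because $y_2,\dots,y_k$ are $k-1$ pairwise distinct terms lying in the later segments, one always has $i_k \ge i_1+(k-1)$; hence if $i_1\ge k$ we are done immediately, and the whole difficulty is concentrated in the extremal case $i_1=k-1$.

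In that extremal case the edge count above is tight, so the dependency graph is a tree: the first segment is a genuine binary tree computing $y_1$ in which each input occurs \emph{exactly once} as a leaf. Here I would record the key algebraic fact that for every node $t_p$ of this tree, with leaf-support $S_p$, the restriction of $y_1$ to the columns $S_p$ equals a nonzero scalar multiple of the coefficient vector of $t_p$, the scalar being the invertible product of the edge-coefficients on the path from $t_p$ up to the root. The heart of the argument is then to show that the second segment cannot consist of a single term, i.e. $y_2$ cannot be produced by one word-XOR $y_2=a\,u\oplus b\,v$, where $u,v$ are drawn from the inputs and first-segment terms (no earlier later-output is yet available). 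The supports $S_u,S_v$ are leaf-sets of the tree, hence either disjoint or nested, and by Lemma \ref{le:3-1} their union is all of $\{1,\dots,k\}$. Using the proportionality fact: if $S_u,S_v$ are disjoint they partition $\{1,\dots,k\}$, so on the part of size $\ge 2$ (which exists since $k\ge 3$) the rows $y_1$ and $y_2$ are proportional; if they are nested then $u=y_1$, and on the columns where $v$ is absent (or, when that remainder is a single column, on $S_v$ instead) the rows $y_1$ and $y_2$ are again proportional. Either way two rows of $M$ agree up to a scalar on at least two columns, so a $2\times 2$ minor vanishes, contradicting the MDS criterion that all minors are invertible \cite{Au2013Exhaustivesearch}; the only alternative, that $y_2$ vanishes on an entire part, contradicts Lemma \ref{le:3-1}. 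Thus the second segment has at least two terms, and $i_k \ge i_1 + 2 + (k-2) = (k-1)+k = 2k-1$.

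I expect the main obstacle to be exactly this last step, namely ruling out the boundary value $i_k=2k-2$. The naive counting only yields $i_k\ge 2k-2$, and squeezing out the final unit genuinely requires the MDS hypothesis rather than the weaker statement that each output depends on all inputs (Lemma \ref{le:3-1}). The proportionality lemma for the minimal tree, combined with the laminar structure of the supports, is what converts the assumption ``$y_2$ is a single XOR'' into a vanishing $2\times 2$ minor; making the case analysis complete and uniform — disjoint versus nested supports, together with the degenerate one-column remainder — is the part that will need the most care.
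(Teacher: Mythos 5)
Your proposal is correct and follows essentially the same route as the paper's proof: reduce to the extremal case $i_1 = k-1$ (where the first segment is a tree using each input exactly once) with a single-XOR second segment, then use the MDS hypothesis to exhibit a vanishing $2\times 2$ minor, splitting into the disjoint-support case (incomparable operands, which forces them to be the two children of $y_1$) and the nested-support case (one operand equals $y_1$), exactly as the paper does. Your edge-counting derivation of $i_1 \geq k-1$ and the explicit proportionality lemma for restricted rows are cleaner formalizations of steps the paper treats more informally, but they do not amount to a different approach.
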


\begin{proof}
	Assume an implementation $I=[t_1, \ldots, t_{i_1}=y_1, \ldots, t_{i_k}=y_k]$ of $M$ corresponds to $T$. By Lemma \ref{le:3-1}, we have $i_1 \geq k-1$. Since each output variable $t_{i_p}=y_p$ ($2 \leq p \leq k$) must be generated, it follows that $i_p - i_{p-1} \geq 1$ ($2 \leq p \leq k$). Thus, $i_k \geq 2k-2$. Equality holds only if $i_1 = k-1$ and $i_p - i_{p-1} = 1$ ($2 \leq p \leq k$). We show that this situation does not arise, implying $i_k \geq 2k-1$.
	
	If $i_1 = k-1$ and $i_p - i_{p-1} = 1$ ($2 \leq p \leq k$), by Lemma \ref{le:3-1}, each $x_q$ appears exactly once in the expansion of $t_{i_1}=y_1$. Consider $y_2 = t_{i_2} = a_{i_2}t_{i_{2,1}} \oplus b_{i_2}t_{i_{2,2}}$ with $i_{2,1} < i_{2,2}$. Then, $t_{i_{2,2}} \prec y_1$ or $t_{i_{2,2}} = y_1$ since $y_2$ is generated immediately after $y_1$. We analyze two cases:
	
	\begin{enumerate}
		\item Consider the case where $t_{i_{2,1}}$ and $t_{i_{2,2}}$ are not comparable under the partial order relation $\prec$. It follows that $t_{i_{2,2}} \prec y_1$. Suppose that $y_1 = t_{i_1} = a_{i_1}t_{i_{2,1}} \oplus b_{i_1}t_{i_{2,2}}$. If this is not the case, then $y_1 = t_{i_1} = a_{i_1}t_{i_{1,1}} \oplus b_{i_1}t_{i_{1,2}}$, which implies that either $t_{i_{2,1}} \prec t_{i_{1,1}}$ or $t_{i_{2,2}} \prec t_{i_{1,1}}$ or $t_{i_{2,1}} \prec t_{i_{1,2}}$ or $t_{i_{2,2}} \prec t_{i_{1,2}}$. This would lead to the omission of some $x_q$ when expanding $y_2$ in terms of the input variables, contradicting Lemma \ref{le:3-1}. Assuming $y_1 = t_{i_1} = a_{i_1}t_{i_{2,1}} \oplus b_{i_1}t_{i_{2,2}}$, since $k \geq 3$, at least one of $t_{i_{2,1}}$ or $t_{i_{2,2}}$ must contain more than two input variables. Let $x_m$ and $x_n$ be two such variables. Considering the $2 \times 2$ matrix $M_2$ of $M$ corresponding to $x_m, x_n; y_1, y_2$, we find that the determinant of $M_2$ is 0, contradicting the fact that $M$ is an MDS matrix. Therefore, the first case does not hold.
		
		\item Now consider the case where $t_{i_{2,1}} \prec t_{i_{2,2}}$. In this scenario, it must be that $t_{i_{2,2}} = y_1$. Otherwise, there would be some $x_q$ missing when expanding $y_2$ in terms of the input variables, contradicting Lemma \ref{le:3-1}. Assuming $t_{i_{2,2}} = y_1$, if $t_{i_{2,1}}$ contains more than two input variables, let $x_m$ and $x_n$ be two such variables. If not, choose $x_m$ and $x_n$ from the missing terms of $t_{i_{2,1}}$. Considering the $2 \times 2$ submatrix $M_2'$ of $M$ corresponding to $x_m, x_n; y_1, y_2$, we find that the determinant of $M_2'$ is 0, contradicting the fact that $M$ is an MDS matrix. Therefore, the second case does not hold.
	\end{enumerate}
	
	In both cases, we reach a contradiction. Therefore, $i_k \geq 2k-1$.
\end{proof}

\subsection{An algorithm for searching the simplest tree}

In the preceding section, we introduced the concept of the implementation tree. We can leverage this structure to construct the $k \times k$ MDS matrix over $R_n$. As exemplified in Example \ref{exg:2}, the cost of the word-xor operation constitutes a significant portion of the overall cost in implementing an MDS matrix. Therefore, our approach prioritizes searching for implementation trees with minimal capacity. Subsequently, we assign parameters for scalar multiplications to derive the specific implementation, and ultimately yield a low-cost MDS matrix.

Our algorithm is divided into two main parts:
\begin{enumerate}
	\item Firstly we treat the parameters $a, b$ of each node $Y = a \cdot Y_1 \oplus b \cdot Y_2$ in an implementation tree as undetermined coefficients. We perform a traversal search to identify all the simplest trees which have the potential to form MDS matrices.
	
	\item Secondly we assign values to the undetermined parameters within the simplest trees. This instantiation process will result in a specific MDS matrix.
\end{enumerate}

By Theorem \ref{th:3-2}, for a given $k \times k$ ($k \geq 3$) MDS matrix $M$ over $R_n$, any implementation $T = [T_1, \cdots, y_1 = T_{i_1}, \cdots, y_k = T_{i_k}]$ of $M$ satisfies the conditions $i_k \geq 2k - 1$, $i_1 \geq k - 1$, and $i_p - i_{p-1} \geq 1$ ($2 \leq p \leq k$). Therefore, we commence our search with $i_k = 2k - 1$ and look for the simplest trees with this capacity. If there does not exist such a simplest tree, we increment $i_k$ and repeat the search.

Next, we outline the procedure for finding a possible implementation tree with a capacity of $i_k$ as below:
\begin{enumerate}
	\item We solve the indefinite equation $s_1 + s_2 + \cdots + s_k = i_k$ under the constraints $s_1 \geq k - 1$ and $s_i \geq 1$ ($2 \leq i \leq k$) to obtain the set of all possible types, denoted by $I_{i_k}$, where each type represents the number of non-leaf nodes in its corresponding implementation tree.
	
	\item For each type $(i_1, i_2 - i_1, \cdots, i_k - i_{k-1}) \in I_{i_k}$, we initiate the search with input variables $x_1, x_2, \cdots, x_k$. Leveraging the property that each node $T_p$ in $T = [T_1, \cdots, y_1 = T_{i_1}, \cdots, y_k = T_{i_k}]$ is the sum of preceding nodes, we perform a traversal search and sequentially generate $k$ binary trees (corresponding to rows of the matrix) of the specified type. Concurrently, we verify if the matrix associated with $T$ is an MDS matrix. For simplicity, we set $x_1 = T_0, x_2 = T_{-1}, \cdots, x_k = T_{-(k-1)}$.
\end{enumerate}

To generate the first binary tree, rooted at $y_1$:
\begin{itemize}
	\item We recursively generate the tree starting from $T_1$. Each $T_1$ is expressed as $T_1 = a_1T_{p_1} \oplus b_1T_{p_2}$ with $-(k-1) \leq p_1 < p_2 \leq 0$. This process yields $C_k^2$ distinct $T_1$ options. Each $T_1$ can then serve as a child node in subsequent trees.
	
	\item We proceed to generate $T_2$ using a similar approach: $T_2 = a_2T_{p_1} \oplus b_2T_{p_2}$ with $-(k-1) \leq p_1 < p_2 \leq 1$. For each $T_1$, we obtain $C_{k+1}^2$ different $T_2$ options, resulting in a total of $C_k^2 \cdot C_{k+1}^2$ unique $(T_1, T_2)$ sequences.
	
	\item We repeat this process until we generate $y_1 = T_{i_1}$. At this point, we have generated $C_k^2 \cdot C_{k+1}^2 \cdots C_{k+i_1-1}^2$ sequences of the form $(T_1, T_2, \cdots, y_1 = T_{i_1})$.
\end{itemize}

At this stage, each generated sequence ensures that each term $T_p$ is expressed as the sum of its predecessors. However, we must further verify the following conditions to determine if these sequences can constitute the first tree in the implementation tree:

\begin{enumerate}
	\item $T_1 \prec y_1, T_2 \prec y_1, \cdots, T_{i_1-1} \prec y_1$. This condition ensures the normality of the implementation and can be checked relatively easily.
	\item Whether the $1 \times k$ matrix corresponding to $y_1$ satisfies the MDS condition (i.e., all submatrix determinants are non-zero). Since the scalar multiplications in the implementation tree lack specific values, the elements of the $1 \times k$ matrix are polynomials in $a_1, b_1, \cdots, a_{i_1}, b_{i_1}$. In practice, we verify if the determinants of all submatrices of the $1 \times k$ matrix are not divisible by 2 by means of symbolic calculation. If a determinant is divisible by 2, it implies that regardless of the specific values assigned to the parameters $a_1, b_1, \cdots, a_{i_1}, b_{i_1}$, they cannot form an MDS matrix. Such sequences are subsequently discarded.
\end{enumerate}

Let $S_1$ denote the set of all sequences that pass the above test. Each sequence $(T_{-(k-1)},\cdots,T_{i_1})$ in $S_1$ serves as the foundation for generating the second tree in the implementation tree hierarchy. 
Following a similar approach to the generation of the first tree, we obtain $\vert S_1\vert\cdot C_{k+i_1}^2\cdots C_{k+i_1+i_2-1}^2$ candidate sequences. Subsequently, we evaluate whether they meet two conditions:
\begin{enumerate}
	\item $T_{i_1+1} \prec y_2, T_{i_1+2} \prec y_2, \cdots, T_{i_1+i_2-1} \prec y_2$
	\item Whether the $2\times k$ matrix associated with $y_1, y_2$ satisfies the MDS condition, i.e., all submatrix determinants are nonzero.
\end{enumerate}

The set of sequences that fulfill the above two conditions is denoted as $S_2$. By extension, all subsequent trees in the implementation tree hierarchy can be generated analogously. This iterative process terminates in two scenarios:
\begin{enumerate}
	\item At row $p$, if $S_p = \emptyset$, it indicates the absence of an implementation tree corresponding to the type $(i_1, i_2-i_1, \cdots, i_k-i_{k-1})$ (or, more precisely, the absence of an implementation tree for types whose first $p$ terms equal $(i_1, i_2-i_1, \cdots, i_p-i_{p-1})$).
	\item If $S_k \neq \emptyset$, then we obtain a successful construction of an implementation tree for the MDS matrix of the given type. Since the search is exhaustive, we can obtain all implementation trees corresponding to the type $(i_1, i_2-i_1, \cdots, i_k-i_{k-1})$.
\end{enumerate}

Finally, by iterating over all possible values of $(i_1, i_2-i_1, \cdots, i_k-i_{k-1})$ in $I_{i_k}$, we either obtain all implementation trees with capacity $i_k$ or determine that no such trees exist. At this point, we increment $i_k$ by 1 and repeat all the above procedures.

\begin{algorithm}[H]
	\caption{Search for the simplest trees that can form an MDS matrix}
	\label{alg:search-simplest-trees}
	\LinesNumbered
	\KwIn{The size $k$ of the matrix $M$ on $R_n$}
	\KwOut{The minimum capacity $i_k$ of all implementations for a $k\times k$ MDS matrix on $R_n$. The simplest trees that can form MDS matrices and their corresponding types $(a_1, a_2, \ldots, a_k)$ satisfy $a_1 \geq k-1$, $a_i \geq 1$ for $2 \leq i \leq k$, and $\sum_{i=1}^{k} a_i = i_k$.}
	
	Initialize $i_k \gets 2k-1$\;
	
	Solve the indefinite equation $\sum_{i=1}^{k} s_i = i_k$ under the constraints $s_1 \geq k-1$ and $s_i \geq 1$ for $i \geq 2$ to obtain the set $I_{i_k}$ of all possible types\;
	
	\ForEach{$(a_1, a_2, \ldots, a_k) \in I_{i_k}$}{
		Sequentially generate each binary tree in the implementation tree\;
		\ForEach{generated binary tree (i.e., a row in the matrix)}{
			Check if the matrix formed by this row and the previous rows meets the MDS conditions\;
			\If{the MDS conditions are not met}{
				Break and continue with the next type in $I_{i_k}$\;
			}
		}
		\If{There exists a simplest tree belonging to type $(a_1, a_2, \ldots, a_k)$}{
			Record the current type and all its corresponding simplest tree\;
		}
	}
	
	\If{no corresponding simplest tree is found for any $(a_1, a_2, \ldots, a_k) \in I_{i_k}$}{
		Increment $i_k \gets i_k + 1$\;
		Go back to Step 2\;
	}
	
	\Return{$i_k$, all recorded simplest trees, and their corresponding types $(a_1, a_2, \ldots, a_k)$.}
\end{algorithm}

\subsection{Feasible structures}

We executed Algorithm \ref{alg:search-simplest-trees} on matrices of sizes $3\times 3$, $4\times 4$, and $5\times 5$ defined over the ring $R_n$ and obtained the following results.

\begin{table}[htbp]
	\centering
	\caption{Outcomes of Algorithm 1}
	\label{tab:algorithm-outcomes}
	\begin{tabular}{ccp{5cm}}
		\hline
		Matrix Size & Minimum XOR Count & Feasible Types \\
		\hline
		$2\times 2$ & 2 & (1,1) \\
		$3\times 3$ & 5 & (2,2,1), (3,1,1) \\
		$4\times 4$ & 8 & (3,3,1,1), (4,2,1,1) \\
		$5\times 5$ & 12 & (4,5,1,1,1), (4,4,2,1,1), (4,4,1,2,1), (5,4,1,1,1), (5,3,2,1,1), (5,3,1,2,1), (6,3,1,1,1), (6,2,2,1,1), (6,2,1,2,1) \\
		\hline
	\end{tabular}
\end{table}

Of particular interest is the $4\times 4$ MDS matrix over $R_n$. Guided by Theorem \ref{th:3-2}, our traversal search began with an implementation tree of capacity 7. Utilizing Algorithm \ref{alg:search-simplest-trees}, we swiftly discovered, within milliseconds, that no tree of this capacity existed. Subsequently, we tested the trees with a capacity of 8 and successfully located some viable types. A comprehensive list of all the simplest trees with a capacity of 8 is provided in Appendix A.

\subsection{The $4\times 4$ MDS matrices with the lowest cost on $R_8$}

In this section we establish the ring $R_8$ as $F_2[\alpha]$, where $\alpha$ denotes the companion matrix associated with the polynomial $\alpha^8 + \alpha^2 + 1$. Explicitly, $\alpha$ is represented by the following matrix:
\begin{equation*}
	\alpha =
	\begin{bmatrix}
		0 & 0 & 0 & 0 & 0 & 0 & 0 & 1 \\
		1 & 0 & 0 & 0 & 0 & 0 & 0 & 0 \\
		0 & 1 & 0 & 0 & 0 & 0 & 0 & 1 \\
		0 & 0 & 1 & 0 & 0 & 0 & 0 & 0 \\
		0 & 0 & 0 & 1 & 0 & 0 & 0 & 0 \\
		0 & 0 & 0 & 0 & 1 & 0 & 0 & 0 \\
		0 & 0 & 0 & 0 & 0 & 1 & 0 & 0 \\
		0 & 0 & 0 & 0 & 0 & 0 & 1 & 0 \\
	\end{bmatrix}.
\end{equation*}

The choice of this ring is motivated by the fact that it contains a unique zero divisor, $\alpha^4 + \alpha + 1$, and the multiplication by $\alpha$ has a minimal cost of just one xor operation.

To construct MDS matrices with the lowest possible cost, we must assign values to the simplest trees described in section 3.3. During the assignment process, we restrict the values of all parameters to the set $\{1, \alpha, \alpha^2, \alpha^3, \alpha^{-1}, \alpha^{-2}, \alpha^{-3}\}$, whose elements have the cost of no more than 2 xor operations. We go through all cases with low costs and obtain 60 distinct $4 \times 4$ MDS matrices over $F_2[\alpha]$ with 67 xor operations. 30 of them are listed in the following tables, and another 30 MDS matrices can be got by replacing $\alpha$ with $\alpha^{-1}$.

\begin{table*}[!ht]
	\caption{MDS matrices and implementations in the simplest tree 1}
	\label{tab:Type1}
	\scriptsize
	\centering
	\resizebox{\textwidth}{26mm}{
		\begin{tabular}{c|c|c}
			\hline
			Number & Matrix & Implementation \\
			\hline
			\multirow{4}{*}{1}&\multirow{4}{*}{\scalebox{0.8}{$
				\begin{array}{cccc}
					\alpha & \alpha & \alpha & 1\\
					\alpha+\alpha^2 & \alpha^2 & 1+\alpha^2 & 1\\
					\alpha^{-1}+\alpha+\alpha^2 & \alpha^{-1}+\alpha^2 & \alpha^2 & 1\\
					\alpha^{-1}+1+\alpha^2 & \alpha^{-1}+\alpha+\alpha^2 & \alpha+\alpha^2 & 1
				\end{array}
				$}}
			&$ x_5 =  x_1 \oplus  x_2,\ x_6 =  x_3 \oplus  x_5,\ x_7 =  x_4 \oplus \alpha\cdot x_6 = y_1 $\\
			& & $ x_{8} =  x_1 \oplus  \alpha\cdot x_6,\ x_{9} =  x_4 \oplus \alpha\cdot x_8,\ x_{10} =  x_3 \oplus  x_{9} = y_2 $\\
			& & $ x_{11} = \alpha^{-1}\cdot x_5 \oplus  x_{9} = y_3 $\\
			& & $ x_{12} =  x_8 \oplus  x_{11} = y_4 $\\
			\hline
			\multirow{4}{*}{2}&\multirow{4}{*}{\scalebox{0.8}{$\begin{array}{cccc} 1 & 1 & 1 & 1\\
					\alpha+\alpha^2 & \alpha^2 & 1+\alpha^2 & \alpha\\
					\alpha^{-1}+\alpha+\alpha^2 & \alpha^{-1}+\alpha^2 & \alpha^2 & \alpha\\
					\alpha^{-1}+1+\alpha^2 & \alpha^{-1}+\alpha+\alpha^2 & \alpha+\alpha^2 & \alpha
				\end{array}
				$}}
			& $ x_5 =  x_1 \oplus  x_2,\ x_6 =  x_3 \oplus  x_5,\ x_7 =  x_4 \oplus  x_6 = y_1 $\\
			
			& & $x_{8} =  x_1 \oplus  \alpha\cdot x_6,\ x_{9} =  x_4 \oplus  x_8,\ x_{10} =  x_3 \oplus \alpha\cdot x_{9} = y_2 $\\
			& & $x_{11} = \alpha^{-1}\cdot x_5 \oplus \alpha\cdot x_{9} = y_3$\\
			& & $x_{12} =  x_8 \oplus  x_{11} = y_4$\\
			\hline
			\multirow{4}{*}{3}&\multirow{4}{*}{\scalebox{0.8}{$ \begin{array}{cccc}
					\alpha & \alpha & 1 & 1\\
					\alpha^2+\alpha^3 & \alpha^3 & 1+\alpha^2 & \alpha\\
					1+\alpha^2+\alpha^3 & 1+\alpha^3 & \alpha^2 & \alpha\\
					1+\alpha+\alpha^3 & 1+\alpha^2+\alpha^3 & \alpha+\alpha^2 & \alpha
				\end{array}
				$}}
			& $ x_5 =  x_1 \oplus  x_2,\ x_6 =  x_3 \oplus \alpha\cdot x_5,\ x_7 =  x_4 \oplus  x_6 = y_1 $\\
			
			& & $x_{8} =  x_1 \oplus  x_6,\ x_{9} =  x_4 \oplus \alpha\cdot x_8,\ x_{10} =  x_3 \oplus \alpha\cdot x_{9} = y_2 $\\
			& & $x_{11} =  x_5 \oplus \alpha\cdot x_{9} = y_3 $\\
			& & $x_{12} = \alpha\cdot x_8 \oplus  x_{11} = y_4$\\
			
			\hline
			\multirow{4}{*}{4}&\multirow{4}{*}{\scalebox{0.8}{$\begin{array}{cccc}
					\alpha & \alpha & \alpha & 1\\
					\alpha+\alpha^2 & \alpha^2 & 1+\alpha^2 & 1\\
					1+\alpha^2+\alpha^3 & 1+\alpha^3 & \alpha^3 & \alpha\\
					1+\alpha+\alpha^3 & 1+\alpha^2+\alpha^3 & \alpha^2+\alpha^3 & \alpha
				\end{array}
				$}}
			& $x_5 =  x_1 \oplus  x_2,\ x_6 =  x_3 \oplus  x_5,\ x_7 =  x_4 \oplus \alpha\cdot x_6 = y_1$\\
			
			& & $x_{8} =  x_1 \oplus \alpha\cdot x_6,\ x_{9} =  x_4 \oplus \alpha\cdot x_8,\ x_{10} =  x_3 \oplus  x_{9} = y_2$\\
			& & $x_{11} =  x_5 \oplus \alpha\cdot x_{9} = y_3$\\
			& & $x_{12} = \alpha\cdot x_8 \oplus  x_{11} = y_4$\\
			\hline
	\end{tabular}}
\end{table*}

\begin{table*}[!ht]
	\caption{MDS matrices and implementations in simplest the tree 2}
	\label{tab:Type2}
	\scriptsize
	\centering
	\resizebox{\textwidth}{26mm}{
		\begin{tabular}{c|c|c}
			\hline
			Number & Matrix & Implementation \\
			\hline
			\multirow{4}{*}{5}&\multirow{4}{*}{\scalebox{0.8}{$
				\begin{array}{cccc}
					1 & 1 & 1 & 1\\
					\alpha^2 & \alpha+\alpha^2 & 1+\alpha & \alpha\\
					\alpha^{-1}+\alpha^2 & \alpha^{-1}+\alpha+\alpha^2 & \alpha^{-1}+\alpha & \alpha\\
					\alpha^{-1}+1+\alpha+\alpha^2 & \alpha^{-1}+1+\alpha^2 & \alpha^{-1}+1+\alpha & 1+\alpha
				\end{array}
				$}}
			&$x_5 =  x_1 \oplus  x_2,\ x_6 =  x_3 \oplus  x_5,\ x_7 =  x_4 \oplus  x_6 = y_1$\\
			& & $x_{8} =  \alpha\cdot x_5 \oplus  x_7,\ x_{9} =  x_1 \oplus  x_8,\ x_{10} =  x_3 \oplus \alpha\cdot x_{9} = y_2$\\
			& & $x_{11} = \alpha^{-1}\cdot x_6 \oplus \alpha\cdot x_{9} = y_3$\\
			& & $x_{12} =  x_8 \oplus  x_{11} = y_4$\\
			\hline
			\multirow{4}{*}{6}&\multirow{4}{*}{\scalebox{0.8}{$ \begin{array}{cccc}
					\alpha & \alpha & 1 & 1\\
					\alpha^{-1} & \alpha^{-1}+1 & \alpha^{-1}+1 & \alpha^{-1}\\
					\alpha^{-1}+\alpha^2 & \alpha^{-1}+1+\alpha^2 & \alpha^{-1}+\alpha & \alpha^{-1}\\
					\alpha^{-1}+1+\alpha+\alpha^2 & \alpha^{-1}+\alpha+\alpha^2 & \alpha^{-1}+1+\alpha & 1+\alpha^{-1}
				\end{array}
				$}}
			& $x_5 =  x_1 \oplus  x_2,\ x_6 = x_3 \oplus \alpha\cdot x_5,\ x_7 =  x_4 \oplus  x_6 = y_1$\\
			
			& & $x_{8} =   x_5 \oplus  x_7,\ x_{9} =  x_1 \oplus \alpha^{-1}\cdot x_8,\ x_{10} =  x_3 \oplus  x_{9} = y_2$\\
			& & $x_{11} = \alpha\cdot x_6 \oplus  x_{9} = y_3$\\
			& & $x_{12} =  x_8 \oplus  x_{11} = y_4$\\
			\hline
			\multirow{4}{*}{7}&\multirow{4}{*}{\scalebox{0.8}{$ \begin{array}{cccc}
					\alpha^{-1} & \alpha^{-1} & 1 & 1\\
					\alpha & 1+\alpha & 1+\alpha & \alpha\\
					\alpha^{-1}+\alpha^2 & \alpha^{-1}+\alpha+\alpha^2 & 1+\alpha^2 & \alpha^2\\
					\alpha^{-1}+\alpha^3 & \alpha^{-1}+\alpha^2+\alpha^3 & 1+\alpha+\alpha^3 & 1+\alpha^3
				\end{array}
				$}}
			& $x_5 =  x_1 \oplus  x_2,\ x_6 =  x_3 \oplus \alpha^{-1}\cdot x_5,\ x_7 =  x_4 \oplus  x_6 = y_1$\\
			& & $x_{8} =   x_5 \oplus  x_7,\ x_{9} =  x_1 \oplus \alpha\cdot x_8,\ x_{10} =  x_3 \oplus  x_{9} = y_2$\\
			& & $x_{11} =  x_6 \oplus  \alpha\cdot x_{9} = y_3$\\
			& & $x_{12} =  x_8 \oplus  \alpha\cdot x_{11} = y_4$\\
			
			\hline
			\multirow{4}{*}{8}&\multirow{4}{*}{\scalebox{0.8}{$ \begin{array}{cccc}
					\alpha^{-1} & \alpha^{-1} & \alpha^{-1} & 1\\
					\alpha^2 & \alpha+\alpha^2 & 1+\alpha & \alpha^2\\
					\alpha^{-1}+\alpha^2 & \alpha^{-1}+\alpha+\alpha^2 & \alpha^{-1}+\alpha & \alpha^2\\
					\alpha^{-1}+\alpha^3 & \alpha^{-1}+\alpha^2+\alpha^3 & \alpha^{-1}+1+\alpha^2 & 1+\alpha^3
				\end{array}
				$}}
			& $x_5 =  x_1 \oplus  x_2,\ x_6 =  x_3 \oplus  x_5,\ x_7 =  x_4 \oplus \alpha^{-1}\cdot x_6 = y_1$\\
			
			& & $x_{8} =   x_5 \oplus  x_7,\ x_{9} =  x_1 \oplus \alpha\cdot x_8,\ x_{10} =  x_3 \oplus \alpha\cdot x_{9} = y_2$\\
			& & $x_{11} = \alpha^{-1}\cdot x_6 \oplus  \alpha\cdot x_{9} = y_3$\\
			& & $x_{12} = \alpha\cdot x_8 \oplus   x_{11} = y_4$\\
			\hline
	\end{tabular}}
\end{table*}

\begin{table*}[!ht]
	\caption{MDS matrices and implementations in the simplest tree 3}
	\label{tab:Type3}
	\scriptsize
	\centering
	\resizebox{\textwidth}{26mm}{
		\begin{tabular}{c|c|c}
			\hline
			Number & Matrix & Implementation \\
			\hline
			\multirow{4}{*}{9}&\multirow{4}{*}{\scalebox{0.8}{$
					\begin{array}{cccc}
						\alpha & \alpha & \alpha & 1\\
						\alpha^2 & \alpha+\alpha^2 & 1+\alpha+\alpha^2 & 1\\
						\alpha^{-1}+\alpha^2 & \alpha^{-1}+\alpha+\alpha^2 & \alpha+\alpha^2 & 1\\
						\alpha^{-1}+1+\alpha+\alpha^2 & \alpha^{-1}+\alpha^2 & \alpha^2 & 1
					\end{array}
					$}}
			&$x_5 =  x_1 \oplus  x_2,\ x_6 =  x_3 \oplus  x_5,\ x_7 =  x_4 \oplus \alpha\cdot x_6 = y_1$\\
			& & $x_{8} =   x_1 \oplus \alpha\cdot x_6,\ x_{9} =  x_7 \oplus \alpha\cdot x_8,\ x_{10} =  x_3 \oplus  x_{9} = y_2$\\
			& & $x_{11} = \alpha^{-1}\cdot x_5 \oplus  x_{9} = y_3$\\
			& & $x_{12} =  x_8 \oplus  x_{11} = y_4$\\
			\hline
			\multirow{4}{*}{10}&\multirow{4}{*}{\scalebox{0.8}{$\begin{array}{cccc} 1 & 1 & 1 & 1\\
						\alpha^2 & \alpha+\alpha^2 & 1+\alpha+\alpha^2 & \alpha\\
						\alpha^{-1}+\alpha^2 & \alpha^{-1}+\alpha+\alpha^2 & \alpha+\alpha^2 & \alpha\\
						\alpha^{-1}+1+\alpha+\alpha^2 & \alpha^{-1}+\alpha^2 & \alpha^2 & \alpha
					\end{array}
					$}}
			& $x_5 =  x_1 \oplus  x_2,\ x_6 =  x_3 \oplus  x_5,\ x_7 =  x_4 \oplus  x_6 = y_1$\\
			
			& & $x_{8} =   x_1 \oplus \alpha\cdot x_6,\ x_{9} =  x_7 \oplus  x_8,\ x_{10} =  x_3 \oplus \alpha\cdot x_{9} = y_2$\\
			& & $x_{11} = \alpha^{-1}\cdot x_5 \oplus \alpha\cdot x_{9} = y_3$\\
			& & $x_{12} =  x_8 \oplus  x_{11} = y_4$\\
			\hline
			\multirow{4}{*}{11}&\multirow{4}{*}{\scalebox{0.8}{$\begin{array}{cccc}
						\alpha & \alpha & 1 & 1\\
						\alpha^3 & \alpha^2+\alpha^3 & 1+\alpha+\alpha^2 & \alpha\\
						1+\alpha^3 & 1+\alpha^2+\alpha^3 & \alpha+\alpha^2 & \alpha\\
						1+\alpha+\alpha^2+\alpha^3 & 1+\alpha^3 & \alpha^2 & \alpha
					\end{array}
					$}}
			& $x_5 =  x_1 \oplus  x_2,\ x_6 =  x_3 \oplus \alpha\cdot x_5,\ x_7 =  x_4 \oplus  x_6 = y_1$\\
			& & $x_{8} =   x_1 \oplus  x_6,\ x_{9} =  x_7 \oplus \alpha\cdot x_8,\ x_{10} =  x_3 \oplus \alpha\cdot x_{9} = y_2$\\
			& & $x_{11} =  x_5 \oplus \alpha\cdot x_{9} = y_3$\\
			& & $x_{12} =  \alpha\cdot x_8 \oplus  x_{11} = y_4$\\
			
			\hline
			\multirow{4}{*}{12}&\multirow{4}{*}{\scalebox{0.8}{$\begin{array}{cccc}
						\alpha & \alpha & \alpha & 1\\
						\alpha^2 & \alpha+\alpha^2 & 1+\alpha+\alpha^2 & 1\\
						1+\alpha^3 & 1+\alpha^2+\alpha^3 & \alpha^2+\alpha^3 & \alpha\\
						1+\alpha+\alpha^2+\alpha^3 & 1+\alpha^3 & \alpha^3 & \alpha
					\end{array}
					$}}
			& $x_5 =  x_1 \oplus  x_2,\ x_6 =  x_3 \oplus  x_5,\ x_7 =  x_4 \oplus \alpha\cdot x_6 = y_1$\\
			
			& & $x_{8} =   x_1 \oplus \alpha\cdot x_6,\ x_{9} =  x_7 \oplus \alpha\cdot x_8,\ x_{10} =  x_3 \oplus x_{9} = y_2$\\
			& & $x_{11} =  x_5 \oplus \alpha\cdot x_{9} = y_3$\\
			& & $x_{12} =  \alpha\cdot x_8 \oplus  x_{11} = y_4$\\
			\hline
	\end{tabular}}
\end{table*}

\begin{table*}[!ht]
	\caption{MDS matrices and implementations in the simplest tree 4}
	\label{tab:Type4}
	\scriptsize
	\centering
	\resizebox{\textwidth}{26mm}{
		\begin{tabular}{c|c|c}
			\hline
			Number & Matrix & Implementation \\
			\hline
			\multirow{4}{*}{13}&\multirow{4}{*}{\scalebox{0.8}{$ \begin{array}{cccc}
						\alpha & \alpha & 1 & 1\\
						\alpha^{-1}+1 & \alpha^{-1} & 1 & \alpha^{-1}\\
						\alpha^{-1}+1+\alpha^2 & \alpha^{-1}+\alpha^2 & \alpha & \alpha^{-1}\\
						\alpha^{-1}+\alpha^2 & \alpha^{-1}+1+\alpha^2 & \alpha & \alpha^{-1}+1
					\end{array}
					$}}
			&$x_5 =  x_1 \oplus  x_2,\ x_6 =  x_3 \oplus \alpha\cdot x_5,\ x_7 =  x_4 \oplus  x_6 = y_1$\\
			& & $x_{8} =   x_4 \oplus  x_5,\ x_{9} =  x_1 \oplus \alpha^{-1}\cdot x_8,\  x_{10} =  x_3 \oplus  x_{9} = y_2$\\
			& & $x_{11} =  \alpha\cdot x_6 \oplus  x_{9} = y_3$\\
			& & $x_{12} =  x_8 \oplus  x_{11} = y_4$\\
			\hline
			\multirow{4}{*}{14}&\multirow{4}{*}{\scalebox{0.8}{$\begin{array}{cccc} 1 & 1 & 1 & 1\\
							\alpha+\alpha^2 & \alpha^2 & 1 & \alpha\\
							\alpha^{-1}+\alpha+\alpha^2 & \alpha^{-1}+\alpha^2 & \alpha^{-1} & \alpha\\
							\alpha^{-1}+\alpha^2 & \alpha^{-1}+\alpha+\alpha^2 & \alpha^{-1} & 1+\alpha
						\end{array}
						$}}
				& $x_5 =  x_1 \oplus  x_2,\ x_6 =  x_3 \oplus  x_5,\ x_7 =  x_4 \oplus  x_6 = y_1$\\
				& & $x_{8} =   x_4 \oplus \alpha\cdot x_5,\ x_{9} =  x_1 \oplus  x_8,\  x_{10} =  x_3 \oplus \alpha\cdot x_{9} = y_2$\\
				& & $x_{11} =  \alpha^{-1}\cdot x_6 \oplus \alpha\cdot x_{9} = y_3$\\
				& & $x_{12} =  x_8 \oplus  x_{11} = y_4$\\
				\hline
				\multirow{4}{*}{15}&\multirow{4}{*}{\scalebox{0.8}{$ \begin{array}{cccc} \alpha^{-1} & \alpha^{-1} & 1 & 1\\
								1+\alpha& \alpha & 1 & \alpha\\
								\alpha^{-1}+\alpha+\alpha^2 & \alpha^{-1}+\alpha^2 & 1 & \alpha^2\\
								\alpha^{-1}+\alpha^2 & \alpha^{-1}+\alpha+\alpha^2 & 1 & \alpha+\alpha^2
							\end{array}
							$}}
					& $x_5 =  x_1 \oplus  x_2,\ x_6 =  x_3 \oplus \alpha^{-1}\cdot x_5,\ x_7 =  x_4 \oplus  x_6 = y_1$\\
					& & $x_{8} =   x_4 \oplus  x_5,\ x_{9} =  x_1 \oplus \alpha\cdot x_8,\  x_{10} =  x_3 \oplus  x_{9} = y_2$\\
					& & $x_{11} =   x_6 \oplus \alpha\cdot x_{9} = y_3$\\
					& & $x_{12} =  \alpha\cdot x_8 \oplus  x_{11} = y_4$\\
					
					\hline
					\multirow{4}{*}{16}&\multirow{4}{*}{\scalebox{0.8}{$\begin{array}{cccc} \alpha^{-1} & \alpha^{-1} & \alpha^{-1} & 1\\
								\alpha+\alpha^2 & \alpha^2 & 1 & \alpha^2\\
								\alpha^{-1}+\alpha+\alpha^2 & \alpha^{-1}+\alpha^2 & \alpha^{-1} & \alpha^2\\
								\alpha^{-1}+\alpha^2 & \alpha^{-1}+\alpha+\alpha^2 & \alpha^{-1} & \alpha+\alpha^2
							\end{array}
							$}}
					& $x_5 =  x_1 \oplus  x_2,\ x_6 =  x_3 \oplus  x_5,\ x_7 =  x_4 \oplus \alpha^{-1}\cdot x_6 = y_1$\\
					& & $x_{8} =   x_4 \oplus  x_5,\ x_{9} =  x_1 \oplus \alpha\cdot x_8,\  x_{10} =  x_3 \oplus \alpha\cdot x_{9} = y_2$\\
					& & $x_{11} =  \alpha^{-1}\cdot x_6 \oplus \alpha\cdot x_{9} = y_3$\\
					& & $x_{12} =  \alpha\cdot x_8 \oplus  x_{11} = y_4$\\
					\hline
			\end{tabular}}
		\end{table*}
		
\begin{table*}[!ht]
	\caption{MDS matrices and implementations in the simplest tree 5}
	\label{tab:Type5}
	\scriptsize
	\centering
	\resizebox{\textwidth}{26mm}{
		\begin{tabular}{c|c|c}
			\hline
			Number & Matrix & Implementation \\
			\hline
			\multirow{4}{*}{17}&\multirow{4}{*}{\scalebox{0.8}{$ \begin{array}{cccc} 1 & 1 & \alpha & \alpha\\
						\alpha^{-1}+\alpha & \alpha & \alpha^{-1}+1 & \alpha^{-1}\\
						\alpha^{-1}+1 & 1 & \alpha^{-1}+1+\alpha & \alpha^{-1}+\alpha\\
						\alpha^{-1}+1+\alpha & \alpha & \alpha^{-1} & \alpha^{-1}+1
					\end{array}
					$}}
			&$x_5 =  x_1 \oplus  x_2,\ x_6 =  x_3 \oplus  x_4,\ x_7 =  x_5 \oplus \alpha\cdot x_6 = y_1$\\
			& & $x_{8} =   x_1 \oplus  x_6,\ x_{9} =  x_3 \oplus \alpha^{-1}\cdot x_8,\  x_{10} = \alpha\cdot x_5 \oplus  x_{9} = y_2$\\
			& & $x_{11} =  x_7 \oplus  x_{9} = y_3$\\
			& & $x_{12} =  x_8 \oplus  x_{10} = y_4$\\
			\hline
			\multirow{4}{*}{18}&\multirow{4}{*}{\scalebox{0.8}{$ \begin{array}{cccc} 1 & 1 & 1 & 1\\
							\alpha^{-1}+\alpha & \alpha^{-1} & \alpha+\alpha^2 & \alpha^2\\
							1+\alpha & 1 & 1+\alpha+\alpha^2 & 1+\alpha^2\\
							\alpha^{-1}+1+\alpha & \alpha^{-1} & \alpha^2 & \alpha+\alpha^2
						\end{array}
						$}}
				& $x_5 =  x_1 \oplus  x_2,\ x_6 =  x_3 \oplus  x_4,\ x_7 =  x_5 \oplus  x_6 = y_1$\\
				
				& & $x_{8} =   x_1 \oplus \alpha\cdot x_6,\ x_{9} =  x_3 \oplus  x_8,\  x_{10} = \alpha^{-1}\cdot x_5 \oplus \alpha\cdot x_{9} = y_2$\\
				& & $x_{11} =  x_7 \oplus \alpha\cdot x_{9} = y_3$\\
				& & $x_{12} =  x_8 \oplus  x_{10} = y_4$\\
				\hline
				\multirow{4}{*}{19}&\multirow{4}{*}{\scalebox{0.8}{$\begin{array}{cccc}
							\alpha & \alpha & 1 & 1\\
							1+\alpha^2 & 1 & \alpha+\alpha^2 & \alpha^2\\
							\alpha+\alpha^2 & \alpha & 1+\alpha+\alpha^2 & 1+\alpha^2\\
							1+\alpha+\alpha^2 & 1 & \alpha^2 & \alpha+\alpha^2
						\end{array}
						$}}
				& $x_5 =  x_1 \oplus  x_2,\ x_6 =  x_3 \oplus  x_4,\ x_7 = \alpha\cdot x_5 \oplus  x_6 = y_1$\\
				& & $x_{8} =   x_1 \oplus  x_6,\ x_{9} =  x_3 \oplus \alpha\cdot x_8,\  x_{10} =  x_5 \oplus \alpha\cdot x_{9} = y_2$\\
				& & $x_{11} =  x_7 \oplus \alpha\cdot x_{9} = y_3$\\
				& & $x_{12} =  \alpha\cdot x_8 \oplus  x_{10} = y_4$\\
				
				\hline
				\multirow{4}{*}{20}&\multirow{4}{*}{\scalebox{0.8}{$\begin{array}{cccc} 1 & 1 & \alpha^{-1} & \alpha^{-1}\\
							1+\alpha^2 & 1 & \alpha+\alpha^2 & \alpha^2\\
							1+\alpha & 1 & \alpha^{-1}+1+\alpha & \alpha^{-1}+\alpha\\
							1+\alpha+\alpha^2 & 1 & \alpha^2 & \alpha+\alpha^2
						\end{array}
						$}}
				& $x_5 =  x_1 \oplus  x_2,\ x_6 =  x_3 \oplus  x_4,\ x_7 =  x_5 \oplus \alpha^{-1}\cdot x_6 = y_1$\\
				& & $x_{8} =   x_1 \oplus  x_6,\ x_{9} =  x_3 \oplus \alpha\cdot x_8,\  x_{10} =  x_5 \oplus \alpha\cdot x_{9} = y_2$\\
				& & $x_{11} =  x_7 \oplus  x_{9} = y_3$\\
				& & $x_{12} =  \alpha\cdot x_8 \oplus  x_{10} = y_4$\\
				\hline
		\end{tabular}}
	\end{table*}
	
	\begin{table*}[!ht]
	\caption{MDS matrices and implementations in the simplest tree 6}
	\label{tab:Type7}
	\scriptsize
	\centering
	\resizebox{\textwidth}{13.5mm}{
		\begin{tabular}{c|c|c}
			\hline
			Number & Matrix & Implementation \\
			\hline
			\multirow{4}{*}{21}&\multirow{4}{*}{\scalebox{0.8}{$\begin{array}{cccc} 1+\alpha & 1 & \alpha & \alpha\\
							\alpha & \alpha & \alpha^{-1} & \alpha^{-1}+1\\
							1 & 1+\alpha & \alpha & 1+\alpha\\
							1+\alpha & \alpha & \alpha^{-1}+1 & \alpha^{-1}
						\end{array}
						$}}
				&$x_5 =  x_1 \oplus  x_2,\ x_6 =  x_3 \oplus  x_4,\ x_7 =  x_1 \oplus  x_6,\  x_8 =  x_5 \oplus \alpha\cdot x_7 = y_1$\\
				& & $x_{9} =    x_4 \oplus \alpha^{-1}\cdot x_5,\  x_{10} = \alpha\cdot x_6 \oplus  x_{9} = y_2$\\
				& & $x_{11} =   x_8 \oplus  x_{9} = y_3$\\
				& & $x_{12} =   x_7 \oplus  x_{10} = y_4$\\
				\hline
				\multirow{4}{*}{22}&\multirow{4}{*}{\scalebox{0.8}{$\begin{array}{cccc} 1+\alpha & 1 & \alpha & \alpha\\
							\alpha^2 & \alpha^2 & 1 & 1+\alpha\\
							1 & 1+\alpha & \alpha & 1+\alpha\\
							\alpha+\alpha^2 & \alpha^2 & 1+\alpha & 1
						\end{array}
						$}}
				& $x_5 =  x_1 \oplus  x_2,\ x_6 =  x_3 \oplus  x_4,\ x_7 =  x_1 \oplus  x_6,\  x_8 =  x_5 \oplus \alpha\cdot x_7 = y_1$\\
				& & $x_{9} =    x_4 \oplus \alpha\cdot x_5,\  x_{10} =  x_6 \oplus \alpha\cdot x_{9} = y_2$\\
				& & $x_{11} =   x_8 \oplus  x_{9} = y_3$\\
				& & $x_{12} =   \alpha\cdot x_7 \oplus  x_{10} = y_4$\\
				\hline
		\end{tabular}}
	\end{table*}
	
	\begin{table*}[!ht]
		\caption{MDS matrices and implementations in the simplest tree 7}
		\label{tab:Type8}
		\scriptsize
		\centering
		\resizebox{\textwidth}{26mm}{
			\begin{tabular}{c|c|c}
				\hline
				Number & Matrix & Implementation \\
				\hline
				\multirow{4}{*}{23}&\multirow{4}{*}{\scalebox{0.8}{$\begin{array}{cccc} 1+\alpha & \alpha & \alpha & 1\\
							\alpha^{-1}+1+\alpha & \alpha^{-1}+\alpha & 1+\alpha & 1\\
							\alpha^{-1}+1+\alpha^2 & \alpha^{-1}+\alpha+\alpha^2 & \alpha+\alpha^2 & 1\\
							\alpha^{-1}+1+\alpha+\alpha^2 & \alpha^{-1}+\alpha^2 & \alpha^2 & 1
						\end{array}
						$}}
				&$x_5 =  x_1 \oplus  x_2,\ x_6 =  x_3 \oplus  x_5,\ x_7 =  x_1 \oplus \alpha\cdot x_6,\  x_8 =  x_4 \oplus  x_7 = y_1$\\
				& & $x_{9} =   \alpha^{-1}\cdot x_5 \oplus  x_8,\  x_{10} =  x_3 \oplus  x_{9} = y_2$\\
				& & $x_{11} =  \alpha\cdot x_7 \oplus  x_{9} = y_3$\\
				& & $x_{12} =  \alpha\cdot x_6 \oplus  x_{11} = y_4$\\
				\hline
				\multirow{4}{*}{24}&\multirow{4}{*}{\scalebox{0.8}{$ \begin{array}{cccc} \alpha^{-1}+1 & \alpha^{-1} & \alpha^{-1} & 1\\
							\alpha^{-1}+1+\alpha & \alpha^{-1}+\alpha & \alpha^{-1}+1 & 1\\
							\alpha^{-1}+\alpha+\alpha^2 & \alpha^{-1}+1+\alpha^2 & \alpha^{-1}+1 & \alpha\\
							\alpha^{-1}+1+\alpha+\alpha^2 & \alpha^{-1}+\alpha^2 & \alpha^{-1} & \alpha
						\end{array}
						$}}
				& $x_5 =  x_1 \oplus  x_2,\ x_6 =  x_3 \oplus  x_5,\ x_7 =  x_1 \oplus \alpha^{-1}\cdot x_6,\  x_8 =  x_4 \oplus  x_7 = y_1$\\
				& & $x_{9} =   \alpha\cdot x_5 \oplus  x_8,\  x_{10} =  x_3 \oplus  x_{9} = y_2$\\
				& & $x_{11} =   x_7 \oplus \alpha\cdot x_{9} = y_3$\\
				& & $x_{12} =  x_6 \oplus  x_{11} = y_4$\\
				\hline
				\multirow{4}{*}{25}&\multirow{4}{*}{\scalebox{0.8}{$\begin{array}{cccc} \alpha+\alpha^2 & \alpha^2 & \alpha & 1\\
							1+\alpha+\alpha^2 & 1+\alpha^2 & 1+\alpha & 1\\
							\alpha^{-1}+1+\alpha^2 & \alpha^{-1}+\alpha+\alpha^2 & 1+\alpha & \alpha^{-1}\\
							\alpha^{-1}+1+\alpha+\alpha^2 & \alpha^{-1}+\alpha^2 & \alpha & \alpha^{-1}
						\end{array}
						$}}
				& $x_5 =  x_1 \oplus  x_2,\ x_6 =  x_3 \oplus \alpha\cdot x_5,\ x_7 =  x_1 \oplus  x_6,\  x_8 =  x_4 \oplus \alpha\cdot x_7 = y_1$\\
				& & $x_{9} =    x_5 \oplus  x_8,\  x_{10} =  x_3 \oplus  x_{9} = y_2$\\
				& & $x_{11} =  \alpha\cdot x_7 \oplus \alpha^{-1}\cdot x_{9} = y_3$\\
				& & $x_{12} =  x_6 \oplus  x_{11} = y_4$\\
				
				\hline
				\multirow{4}{*}{26}&\multirow{4}{*}{\scalebox{0.8}{$ \begin{array}{cccc} \alpha+\alpha^2 & \alpha^2 & \alpha^2 & 1\\
							\alpha^{-1}+1+\alpha & \alpha^{-1}+\alpha & 1+\alpha & \alpha^{-1}\\
							\alpha^{-1}+1+\alpha^2 & \alpha^{-1}+\alpha+\alpha^2 & \alpha+\alpha^2 & \alpha^{-1}\\
							\alpha^{-1}+1+\alpha+\alpha^2 & \alpha^{-1}+\alpha^2 & \alpha^2 & \alpha^{-1}
						\end{array}
						$}}
				& $x_5 =  x_1 \oplus  x_2,\ x_6 =  x_3 \oplus  x_5,\ x_7 =  x_1 \oplus \alpha\cdot x_6,\  x_8 =  x_4 \oplus \alpha\cdot x_7 = y_1$\\
				& & $x_{9} =    x_5 \oplus  x_8,\  x_{10} =  x_3 \oplus \alpha^{-1}\cdot x_{9} = y_2$\\
				& & $x_{11} =  \alpha\cdot x_7 \oplus \alpha^{-1}\cdot x_{9} = y_3$\\
				& & $x_{12} =  \alpha\cdot x_6 \oplus  x_{11} = y_4$\\
				\hline
		\end{tabular}}
	\end{table*}
	
	\begin{table*}[!ht]
		\caption{MDS matrices and implementations in the simplest tree 8}
		\label{tab:Type6}
		\scriptsize
		\centering
		\resizebox{\textwidth}{26mm}{
			\begin{tabular}{c|c|c}
				\hline
				Number & Matrix & Implementation \\
				\hline
				\multirow{4}{*}{27}&\multirow{4}{*}{\scalebox{0.8}{$ \begin{array}{cccc} \alpha & \alpha & 1+\alpha & 1\\
							1+\alpha^2 & \alpha^2 & \alpha^2 & 1+\alpha\\
							\alpha^{-1}+1+\alpha^2 & \alpha^{-1}+\alpha^2 & \alpha^2 & \alpha\\
							\alpha^{-1}+\alpha^2 & \alpha^{-1}+1+\alpha^2 & 1+\alpha^2 & \alpha
						\end{array}
						$}}
				&$x_5 =  x_1 \oplus  x_2,\ x_6 =  x_3 \oplus  x_5,\ x_7 =  x_4 \oplus \alpha\cdot x_6,\  x_8 =  x_3 \oplus  x_7 = y_1$\\
				& & $x_{9} =    x_1 \oplus \alpha^{-1}\cdot x_7,\  x_{10} =  x_4 \oplus  x_{9} = y_2$\\
				& & $x_{11} =  \alpha\cdot x_5 \oplus  x_{9} = y_3$\\
				& & $x_{12} =   x_6 \oplus  x_{11} = y_4$\\
				\hline
				\multirow{4}{*}{28}&\multirow{4}{*}{\scalebox{0.8}{$ \begin{array}{cccc} \alpha^2 & \alpha^2 & 1+\alpha & 1\\
							\alpha+\alpha^3 & \alpha^3 & \alpha^2 & 1+\alpha\\
							1+\alpha+\alpha^3 & 1+\alpha^3 & \alpha^2 & \alpha\\
							1+\alpha^3 & 1+\alpha+\alpha^3 & 1+\alpha^2 & \alpha
						\end{array}
						$}}
				& $x_5 =  x_1 \oplus  x_2,\ x_6 =  x_3 \oplus \alpha\cdot x_5,\ x_7 =  x_4 \oplus \alpha\cdot x_6,\  x_8 =  x_3 \oplus  x_7 = y_1$\\
				& & $x_{9} =    x_1 \oplus  x_7,\  x_{10} =  x_4 \oplus \alpha\cdot x_{9} = y_2$\\
				& & $x_{11} =   x_5 \oplus \alpha\cdot x_{9} = y_3$\\
				& & $x_{12} =   x_6 \oplus  x_{11} = y_4$\\
				\hline
				\multirow{4}{*}{29}&\multirow{4}{*}{\scalebox{0.8}{$ \begin{array}{cccc} \alpha^2 & \alpha^2 & 1+\alpha & \alpha\\
							1+\alpha^2 & \alpha^2 & \alpha & 1+\alpha\\
							1+\alpha+\alpha^3 & 1+\alpha^3 & \alpha^2 & \alpha^2\\
							1+\alpha^3 & 1+\alpha+\alpha^3 & 1+\alpha^2 & \alpha^2
						\end{array}
						$}}
				& $x_5 =  x_1 \oplus  x_2,\ x_6 =  x_3 \oplus \alpha\cdot x_5,\ x_7 =  x_4 \oplus  x_6,\  x_8 =  x_3 \oplus \alpha\cdot x_7 = y_1$\\
				& & $x_{9} =    x_1 \oplus \alpha\cdot x_7,\  x_{10} =  x_4 \oplus  x_{9} = y_2$\\
				& & $x_{11} =   x_5 \oplus \alpha\cdot x_{9} = y_3$\\
				& & $x_{12} =   x_6 \oplus  x_{11} = y_4$\\
				
				\hline
				\multirow{4}{*}{30}&\multirow{4}{*}{\scalebox{0.8}{$ \begin{array}{cccc} \alpha & \alpha & 1+\alpha & 1\\
							1+\alpha^2 & \alpha^2 & \alpha^2 & 1+\alpha\\
							1+\alpha+\alpha^3 & 1+\alpha^3 & \alpha^3 & \alpha^2\\
							1+\alpha^3 & 1+\alpha+\alpha^3 & \alpha+\alpha^3 & \alpha^2
						\end{array}
						$}}
				& $x_5 =  x_1 \oplus  x_2,\ x_6 =  x_3 \oplus  x_5,\ x_7 =  x_4 \oplus \alpha\cdot x_6,\  x_8 =  x_3 \oplus  x_7 = y_1$\\
				& & $x_{9} =    x_1 \oplus \alpha\cdot x_7,\  x_{10} =  x_4 \oplus  x_{9} = y_2$\\
				& & $x_{11} =   x_5 \oplus \alpha\cdot x_{9} = y_3$\\
				& & $x_{12} =  \alpha\cdot x_6 \oplus  x_{11} = y_4$\\
				\hline
		\end{tabular}}
	\end{table*}

\begin{remark}
	The two optimal $4\times 4$ matrices, as cited in \cite{duval2018mds}, are encompassed within matrices labeled 21 and 22; the ten matrices documented in \cite{wang2023four} are integrated into matrices numbered 17, 19, 20, 21, and 22; the 52 matrices detailed in \cite{ShiWang2022} are represented in the matrices numbered 1-8 and 13-30. The rest eight matrices labeled 9-12 represent newly reported findings.
\end{remark}

Furthermore, if we replace $\alpha$ with the companion matrix of the polynomial $x^4 + x + 1$, we can naturally obtain 60 new MDS matrices with a word length of 4 that require only 35 XOR operations for their implementation.

It is important to note that substituting $\alpha$ with any invertible $8 \times 8$ matrix over the finite field $F_2$, whose minimal polynomial is $\alpha^8 + \alpha^2 + 1$, might yield some new results. However, as long as the substituted matrix can be implemented using a single xor operation, the resulting matrices will retain their MDS property and can still be implemented using 67 xor operations. It should be emphasized that the matrix obtained through such a substitution is not equivalent to the original matrix. Furthermore, while there exist numerous substitution methods that satisfy these conditions, we refrain from enumerating them here due to their extensive nature.

By slightly relaxing the requirements on the implementation cost, our method can yield a large number of $4\times 4$ MDS matrices that can be implemented with 36-41 XOR gates (for 4-bit input) and 68-80 XOR gates (for 8-bit input). If we limit the circuit depth, we can obtain MDS matrices with optimal costs for different depths. These matrices are included in Appendix B.

\subsection{Construction of $4\times 4$ Involutory MDS Matrices}

In this section, we endeavor to construct lightweight involutory MDS matrices by assigning specific parameters to the eight simplest trees obtained in Section 3.3. As a result, we have identified six involutory MDS matrices that can be efficiently implemented using only 68 XOR gates. To the best of our knowledge, this represents the current optimal result in terms of implementation cost.

Since row permutations and row multiplications affect the involutory properties of a matrix, we must consider these operations during our search. Specifically, when searching for an involutory matrix, we must account for the original $8\times 2 = 16$ parameters, as well as additional 4 parameters to represent row multiplications and $4! = 24$ possible row orders.

\begin{remark}
	It is important to note that the order of the matrix columns does not need to be considered. If $P$ and $Q$ are permutation matrices such that $(PAQ)^2 = I$, then right-multiplying both sides of the equation by $Q^{-1}$ and left-multiplying by $Q$ yields $QPAQPA = I$. This means that if there exists an involutory matrix in the equivalence class of $A$, it can be obtained simply by swapping the rows of $A$.
\end{remark}

We define the equivalence class of an involutory MDS matrix as follows: If $A$ is an involutory MDS matrix and $A' = P^{-1}AP$, where $P$ is a permutation matrix, then $A$ and $A'$ belong to the same equivalence class. It is evident that the implementation of $A'$ differs from that of $A$ only in the order of the input and output variables, and both matrices are either involutory or non-involutory.

Our search is restricted to the ring $F[x]/(x^8 + x^2 + 1)$, and the parameters are powers of $\alpha$, where $\alpha$ is the companion matrix of $x^8 + x^2 + 1$. We calculate the cost of $\alpha^n$ as $|n|$ and consider the sum of these costs as the total cost $t$. Scalar multiplications can be reused, potentially reducing the practical cost below $t$. During the search, we select $s$ (where $s \leq 6$) out of 20 parameters for assignment, fixing the unselected parameters at 1. We limit the total cost of the parameters to $t \leq 8$.

Under these conditions, we find that only the simplest trees 3 and 4 can produce involutory matrices. These involutory matrices satisfy $s \geq 4$ and $t \geq 5$. Focusing on the lightest involutory MDS matrices with $t = 5$, we identify 6 matrices from the simplest tree 3 and 12 matrices from the simplest tree 4. We provide these 18 matrices satisfying $t = 5$ below.

Simplest tree 3:
\begin{align*}
	\begin{cases}
		y_1=(a_5(a_3(a_0x_1\oplus a_1x_2)\oplus a_2x_3)\oplus a_4x_4)\cdot a_{16},\\
		y_2=(a_{10}x_3\oplus a_{11}(a_9(a_7(a_3(a_0x_1\oplus a_1x_2)\oplus a_2x_3)\oplus a_6x_1)\oplus a_8y_1))\cdot a_{17},\\
		y_3=(a_{12}(a_0x_1\oplus a_1x_2)\oplus a_{13}(a_9(a_7(a_3(a_0x_1\oplus a_1x_2)\oplus a_2x_3)\oplus a_6x_1)\oplus a_8y_1))\cdot a_{18},\\
		y_4=(a_{15}y_3\oplus a_{14}(a_7(a_3(a_0x_1\oplus a_1x_2)\oplus a_2x_3)\oplus a_6x_1))\cdot a_{19}.
	\end{cases}
\end{align*}

The six MDS matrices, associated with the simplest tree 3 and adhering to the condition $t=5$, are designated as follows:
\begin{enumerate}
	\item $a_{6}=\alpha,a_{7}=\alpha^2,a_{12}=\alpha^{-1},a_{13}=\alpha$.
	\item $a_{3}=\alpha,a_{9}=\alpha^2,a_{12}=\alpha^{-1},a_{14}=\alpha$.
	\item $a_{3}=\alpha,a_{6}=\alpha,a_{7}=\alpha,a_{9}=\alpha,a_{12}=\alpha^{-1}$.
	\item $a_{7}=\alpha,a_{9}=\alpha,a_{12}=\alpha^{-1},a_{13}=\alpha,a_{14}=\alpha$.
	\item $a_{0}=\alpha^{-1},a_{6}=\alpha,a_{7}=\alpha,a_{12}=\alpha^{-1},a_{15}=\alpha^{-1}$.
	\item $a_{0}=\alpha^{-1},a_{9}=\alpha,a_{12}=\alpha^{-1},a_{14}=\alpha,a_{15}=\alpha^{-1}$.
\end{enumerate}
The aforementioned six matrices exhibit a specific order among their rows, which is as follows:
\begin{equation*}
	P=\left[ \begin{array}{cccc}
		0 & 0 & 1 & 0\\
		0 & 0 & 0 & 1\\
		1 & 0 & 0 & 0\\
		0 & 1 & 0 & 0
	\end{array}
	\right ].
\end{equation*}

The resulting matrix $M$ is expected to have the order $(y_3, y_4, y_1, y_2)$. Among the six involutory MDS matrices with $t=5$ mentioned previously, the fourth and sixth assignments demonstrate scalar multiplication reuse, which are listed below. Since the terms multiplied by these scalars are identical, the multiplication cost is calculated only once. Therefore, the total cost of the entire matrix operation is determined to be $8 \times 8 + 4 = 68$.
\begin{equation*}
	4:\left[ \begin{array}{cccc}
		\alpha^3+\alpha^2+\alpha+\alpha^{-1} & \alpha^3+\alpha+\alpha^{-1} & \alpha^3+\alpha & \alpha\\
		\alpha^3+\alpha^{-1} & \alpha^3+\alpha^2+\alpha+\alpha^{-1} & \alpha^3+\alpha^2+\alpha & \alpha\\
		1 & 1 & 1 & 1\\
		\alpha^2+\alpha+1 & \alpha^2+1 & \alpha^2 & 1
	\end{array}
	\right ],
\end{equation*}

\begin{equation*}
	6:\left[ \begin{array}{cccc}
		\alpha+1+\alpha^{-1}+\alpha^{-2} & \alpha+1+\alpha^{-1} & \alpha+1 & 1\\
		\alpha+\alpha^{-1}+\alpha^{-2}+\alpha^{-3} & \alpha+1+\alpha^{-1}+\alpha^{-2} & \alpha+1+\alpha^{-1} & \alpha^{-1}\\
		\alpha^{-1} & 1 & 1 & 1\\
		\alpha+1+\alpha^{-1} & \alpha+1 & \alpha & 1
	\end{array}
	\right ].
\end{equation*}

Simplest tree 4:
\begin{align*}
	\begin{cases}
		y_1=(a_4(a_0x_1\oplus a_1x_2)\oplus a_5(a_2x_3\oplus a_3x_4))\cdot a_{16}, \\
		y_2=(a_{11}(a_9(a_6x_1\oplus a_7(a_2x_3\oplus a_3x_4))\oplus a_8x_3)\oplus a_{10}(a_0x_1\oplus a_1x_2))\cdot a_{17}, \\
		y_3=(a_{12}y_1\oplus a_{13}(a_9(a_6x_1\oplus a_7(a_2x_3\oplus a_3x_4))\oplus a_8x_3))\cdot a_{18}, \\
		y_4=(a_{15}y_2\oplus a_{14}(a_6x_1\oplus a_7(a_2x_3\oplus a_3x_4)))\cdot a_{19}.
	\end{cases}
\end{align*}

Here we present the twelve assignments of MDS matrices corresponding to the simplest tree 4 satisfying the condition ($t=5$). These assignments are enumerated as follows:
\begin{enumerate}
	\item $a_{4}=\alpha,a_{5}=\alpha^{-1},a_{8}=\alpha,a_{9}=\alpha^2$.
	\item $a_{6}=\alpha,a_{7}=\alpha^2,a_{10}=\alpha^{-1},a_{11}=\alpha$.
	\item $a_{5}=\alpha^{-1},a_{7}=\alpha,a_{11}=\alpha^2,a_{13}=\alpha$.
	\item $a_{4}=\alpha,a_{9}=\alpha^2,a_{10}=\alpha^{-1},a_{14}=\alpha$.
	\item $a_{4}=\alpha,a_{6}=\alpha,a_{7}=\alpha,a_{9}=\alpha,a_{10}=\alpha^{-1}$.
	\item $a_{5}=\alpha^{-1},a_{7}=\alpha,a_{8}=\alpha,a_{9}=\alpha,a_{11}=\alpha$.
	\item $a_{2}=\alpha^{-1},a_{5}=\alpha^{-1},a_{8}=\alpha,a_{9}=\alpha,a_{12}=\alpha^{-1}$.
	\item $a_{4}=\alpha,a_{5}=\alpha^{-1},a_{9}=\alpha,a_{11}=\alpha,a_{13}=\alpha$.
	\item $a_{2}=\alpha^{-1},a_{5}=\alpha^{-1},a_{11}=\alpha,a_{12}=\alpha^{-1},a_{13}=\alpha$.
	\item $a_{7}=\alpha,a_{9}=\alpha,a_{10}=\alpha^{-1},a_{11}=\alpha,a_{14}=\alpha$.
	\item $a_{0}=\alpha^{-1},a_{6}=\alpha,a_{7}=\alpha,a_{10}=\alpha^{-1},a_{15}=\alpha^{-1}$.
	\item $a_{0}=\alpha^{-1},a_{9}=\alpha,a_{10}=\alpha^{-1},a_{14}=\alpha,a_{15}=\alpha^{-1}$.
\end{enumerate}
The order between the rows for the above assignments is given by the permutation matrix:
\begin{equation*}
	P=\left[ \begin{array}{cccc}
		0 & 0 & 1 & 0\\
		0 & 0 & 0 & 1\\
		1 & 0 & 0 & 0\\
		0 & 1 & 0 & 0
	\end{array}
	\right ],
\end{equation*}
which implies that the resulting matrix $M$ should have the order $(y_2, y_4, y_1, y_3)$.

Among the twelve involutory MDS matrices presented above, the 8th, 9th, 10th, and 12th assignments exhibit scalar multiplication reuse and have the cost of 68:

\begin{equation*}
	\begin{split}
		\left[ \begin{array}{cccc}
			\alpha^2+1 & 1 & \alpha^2+\alpha & \alpha^2\\
			\alpha^2 & 1 & \alpha^2+\alpha+1 & \alpha^2+1\\
			\alpha & \alpha & \alpha^{-1} & \alpha^{-1}\\
			\alpha^2+\alpha & \alpha & \alpha^2+\alpha+\alpha^{-1} & \alpha^2+\alpha^{-1}
		\end{array}
		\right] &
		\left[ \begin{array}{cccc}
			\alpha+1 & 1 & \alpha+1 & \alpha\\
			\alpha & 1 & \alpha+1+\alpha^{-1} & \alpha+1\\
			1 & 1 & \alpha^{-2} & \alpha^{-1}\\
			\alpha+\alpha^{-1} & \alpha^{-1} & \alpha+1+\alpha^{-3} & \alpha+\alpha^{-2}
		\end{array}
		\right ], \\
		(8) \quad\quad\quad\quad\quad\quad\quad\quad\quad & \quad\quad\quad\quad\quad\quad\quad\quad\quad (9)
	\end{split}
\end{equation*}

\begin{equation*}
	\begin{split}
		\left[ \begin{array}{cccc}
			\alpha^2+\alpha^{-1} & \alpha^{-1} & \alpha^3+\alpha & \alpha^3\\
			\alpha^2+\alpha+\alpha^{-1} & \alpha^{-1} & \alpha^3+\alpha^2+\alpha & \alpha^3+\alpha^2\\
			1 & 1 & 1 & 1\\
			\alpha+1 & 1 & \alpha^2 & \alpha^2+1
		\end{array}
		\right ] &
		\left[ \begin{array}{cccc}
			\alpha+\alpha^{-2} & \alpha^{-1} & \alpha+1 & \alpha\\
			\alpha+1+\alpha^{-3} & \alpha^{-2} & \alpha+1+\alpha^{-1} & \alpha+1\\
			\alpha^{-1} & 1 & 1 & 1\\
			\alpha+\alpha^{-1} & 1 & \alpha & \alpha+1
		\end{array}
		\right ]. \\
		(10) \quad\quad\quad\quad\quad\quad\quad\quad\quad & \quad\quad\quad\quad\quad\quad\quad\quad\quad (12)    
	\end{split}
\end{equation*}

\section{Construction of higher order MDS matrices}

Due to the high efficiency of our algorithm, we are able to search for higher-order MDS matrices. Below, we present the implementation tree structures for MDS matrices of order 5 and 6, along with the actual matrices. Specifically, we demonstrate a 5th-order MDS matrix with a word length of 8 that can be implemented using 114 xor operations, and a 6th-order MDS matrix with a word length of 8 that can be implemented using 148 xor operations.

\subsection{5th-order MDS Matrix}

Using Algorithm \ref{alg:search-simplest-trees}, we can obtain the implementation tree of a 5th-order MDS matrix. Figure \ref{fig:4-3} illustrates such an implementation tree for a $5\times 5$ MDS matrix.

\begin{figure}[H]
	\centering
	\includegraphics[height=6cm, width=10cm]{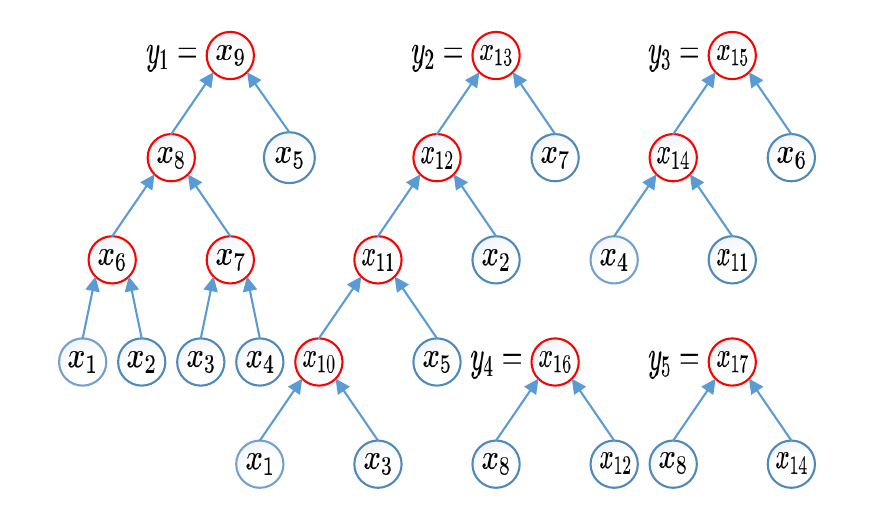}
	\caption{An implementation tree for a $5\times 5$ MDS matrix.}\label{fig:4-3}
\end{figure}

When dealing with matrices of higher orders, we arrive the case that the number of parameters to be assigned increases significantly, so we necessitate a filtering process. Consider the implementation tree of the $5\times 5$ MDS matrix depicted in Figure \ref{fig:4-3} as an example. This tree is comprised of 12 XOR operations, and totally 24 parameters ($a_1, a_2, \ldots, a_{24}$) need to be assigned, where each $a_i$ represents an $n\times n$ matrix over $F_2$ and $n$ denotes the word length:
$$ x_6 = a_1\cdot x_1 \oplus a_2\cdot x_2,\ x_7 = a_3\cdot x_3 \oplus a_4\cdot x_4,\ x_8 = a_5\cdot x_6 \oplus a_6\cdot x_7,\ x_9 = a_7\cdot x_5 \oplus a_8\cdot x_8 = y_1, $$
$$ x_{10} = a_9\cdot x_1 \oplus a_{10}\cdot x_3,\ x_{11} = a_{11}\cdot x_5 \oplus a_{12}\cdot x_8,\ x_{12} = a_{13}\cdot x_2 \oplus a_{14}\cdot x_{11},$$
$$ x_{13} = a_{15}\cdot x_7 \oplus a_{16}\cdot x_{12} = y_2 ,$$
$$ x_{14} =a_{17}\cdot x_4 \oplus a_{18}\cdot x_{11},\ x_{15} = a_{19}\cdot x_6 \oplus a_{20}\cdot x_{14} = y_3, $$
$$ x_{16} = a_{21}\cdot x_8 \oplus a_{22}\cdot x_{12} = y_4, $$
$$ x_{17} = a_{23}\cdot x_8 \oplus a_{24}\cdot x_{14} = y_5, $$

To minimize the implementation cost of the MDS matrix, we aim to maximize the number of identity matrices among the assigned parameters ($a_1, a_2, \ldots, a_{24}$). This simplification process involves searching for the minimum number of non-identity matrices required to form an MDS matrix implementation tree. Initially, we assume that at least $s$ non-identity matrices are needed. We start with $s = 1$ and iterate through all $C_{24}^s$ possible combinations of positions for these non-identity matrices. If a valid MDS matrix can be formed with this assignment, we have a simplified implementation tree. Otherwise, we increment $s$ and repeat the process until a valid tree is found.

In practice, it is found that an implementation tree for a $5\times 5$ MDS matrix can be constructed using only 5 non-identity matrices among the 24 positions. Such a tree is given by the following equations:
\begin{align*}
	x_6 &= x_1 \oplus b_1 \cdot x_2, & x_7 &= x_3 \oplus x_4, \\
	x_8 &= b_2 \cdot x_6 \oplus b_3 \cdot x_7, & x_9 &= x_5 \oplus x_8 = y_1, \\
	x_{10} &= x_1 \oplus x_3, & x_{11} &= x_5 \oplus x_8, \\
	x_{12} &= x_2 \oplus b_4 \cdot x_{11}, & x_{13} &= x_7 \oplus x_{12} = y_2, \\
	x_{14} &= x_4 \oplus b_5 \cdot x_{11}, & x_{15} &= x_6 \oplus x_{14} = y_3, \\
	x_{16} &= x_8 \oplus x_{12} = y_4, & x_{17} &= x_8 \oplus x_{14} = y_5.
\end{align*}

The algorithm for simplifying the MDS matrix implementation tree structure is outlined in Algorithm \ref{alg:simplify-mds-tree}.

\begin{algorithm}[H]
	\caption{Simplify the MDS Matrix Implementation Tree Structure}\label{alg:simplify-mds-tree}
	\KwIn{The $k\times k$ MDS matrix $M_k$ on $R_n$, the implementation tree $T_k$ of $M_k$, and its type $(a_1, a_2, \ldots, a_k)$.}
	\KwOut{The simplified implementation tree $T_k'$ of $M_k$.}
	\For{$s \gets 1$ \KwTo $2i_k$}{
		\ForEach{combination of $s$ positions from $2i_k$}{
			Assign non-identity matrices to these positions and test if the resulting tree forms an MDS matrix\;
			\If{a valid MDS matrix is formed}{
				Output the positions of the $s$ non-identity matrix parameters\;
				\Return the simplified implementation tree\;
			}
		}
	}
\end{algorithm}

From the preceding example, it is evident that Algorithm \ref{alg:simplify-mds-tree} significantly reduces the number of parameters that require assignment. For these reduced parameters, we can directly assign values. Consider the implementation tree of the $5\times 5$ MDS matrix presented in the previous section (Figure \ref{fig:4-3}) as a case study. Assuming a word length of 8, each parameter takes an $8\times 8$ matrix over $F_2$.

Initially, we select the matrix $\alpha$ as our basis, whose minimum polynomial is $x^8+x^6+x^5+x^3+1$:
\begin{small}
	\begin{equation*}
		\alpha =
		\begin{bmatrix}
			0 & 1 & 0 & 0 & 0 & 0 & 0 & 1 \\
			1 & 0 & 0 & 0 & 0 & 0 & 0 & 0 \\
			0 & 1 & 0 & 0 & 1 & 0 & 0 & 0 \\
			0 & 0 & 1 & 0 & 0 & 0 & 0 & 0 \\
			0 & 0 & 0 & 1 & 0 & 0 & 0 & 0 \\
			0 & 0 & 0 & 0 & 1 & 0 & 0 & 0 \\
			0 & 0 & 0 & 0 & 0 & 1 & 0 & 0 \\
			0 & 0 & 0 & 0 & 0 & 0 & 1 & 0 \\
		\end{bmatrix}.
	\end{equation*}
\end{small}
This polynomial is primitive of degree 8 and requires only 2 xor operations for implementation. To minimize the cost of matrix implementation, we opt for parameter values from lower powers of $\alpha$, such as $\alpha^{-4},\alpha^{-3},\alpha^{-2},\alpha^{-1},\alpha^{1},\alpha^{2},\alpha^{3},\alpha^{4}$. For a given $5\times 5$ MDS matrix, we need to go through $8^5 = 2^{15}$ possible assignments.

\begin{algorithm}[]
	\caption{Assign values to the implementation tree of an MDS matrix to obtain a specific MDS matrix}
	\label{alg:4}
	\LinesNumbered
	\KwIn{The $k\times k$ MDS matrix $M_k$ on $R_n$, the simplified implementation tree $T_k$ of $M_k$, positions $b_1,\ldots,b_t$ for parameter assignment, substrate matrix $\alpha$, and parameter assignment range $\{\alpha^{-L},\ldots,\alpha^L\}$.}
	\KwOut{A concrete implementation of the $k\times k$ MDS matrix $M_k$ on $R_n$.}
	\ForEach{$(b_1,\ldots,b_t) \in \{\alpha^{-L},\ldots,\alpha^L\}^t$}{
		Test if the parameter values form an MDS matrix.\\
		\If{The values of $b_1,\ldots,b_t$ form an MDS matrix}{
			Output the values of $b_1,\ldots,b_t$.
		}
	}
	\If{None of the assignments form an MDS matrix}{
		Output "Algorithm failed."
	}
\end{algorithm}

Running Algorithms \ref{alg:simplify-mds-tree} and \ref{alg:4}, we get a concrete implementation of the $5\times 5$ MDS matrix on $GL(2,8)$:
\begin{align*}
	x_6 &= x_1 \oplus \alpha \cdot x_2, & x_7 &= x_3 \oplus x_4, \\
	x_8 &= \alpha^{-2} \cdot x_6 \oplus \alpha^2 \cdot x_7, & x_9 &= x_5 \oplus x_8 = y_1, \\
	x_{10} &= x_1 \oplus x_3, & x_{11} &= x_5 \oplus x_8, \\
	x_{12} &= x_2 \oplus \alpha^{-3} \cdot x_{11}, & x_{13} &= x_7 \oplus x_{12} = y_2, \\
	x_{14} &= x_4 \oplus \alpha \cdot x_{11}, & x_{15} &= x_6 \oplus x_{14} = y_3, \\
	x_{16} &= x_8 \oplus x_{12} = y_4, & x_{17} &= x_8 \oplus x_{14} = y_5,
\end{align*}
where $\alpha$ is as defined earlier, with a minimal polynomial of $x^8+x^6+x^5+x^3+1$. The overall implementation cost is $12 \times 8 + 9 \times 2 = 114$ xor gates.

\subsection{6th-order MDS Matrix}

Similarly to the 5th-order matrix, we do the same for a 6th-order matrix. An implementation tree of 6th-order MDS matrix is depicted below:
\begin{figure}[H]
	\centering
	\includegraphics[height=7cm, width=10cm]{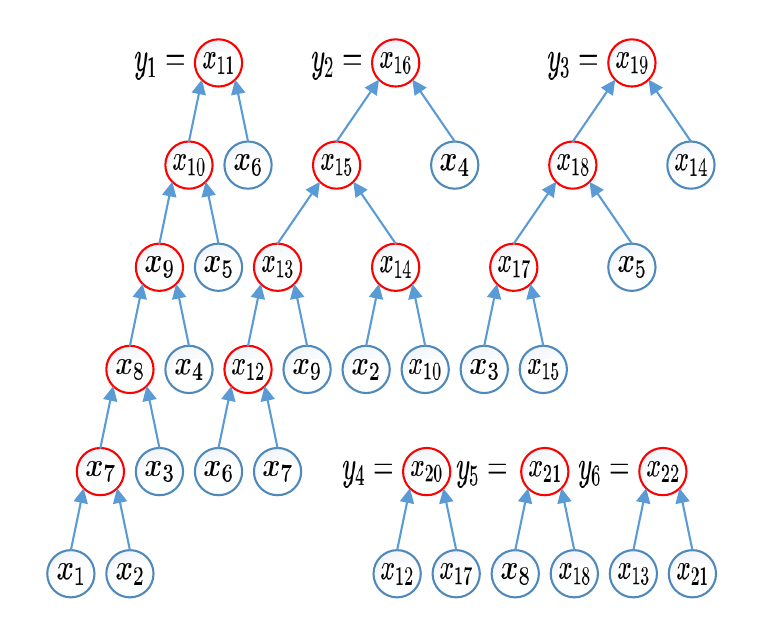}
	\caption{An implementation tree for a $6\times 6$ MDS matrix}\label{fig:4-4}
\end{figure}
Its corresponding matrix is:
$$ x_7 =  x_1 \oplus  x_2,\ x_8 =  x_3 \oplus \alpha\cdot x_7,\ x_9 =  x_4 \oplus \alpha^3\cdot x_8,\ x_{10} =  x_5 \oplus  \alpha^{-1}\cdot x_9,\ x_{11} =  x_6 \oplus  x_{10} = y_1, $$
$$ x_{12} =  x_6 \oplus  x_7,\ x_{13} =  x_9 \oplus  x_{12},\ x_{14} =  x_2 \oplus \alpha^{-1}\cdot x_{10},\ x_{15} =  \alpha^{2}\cdot x_{13} \oplus  x_{14},$$
$$ x_{16} =  x_4 \oplus  x_{15} = y_2, $$
$$ x_{17} = x_3 \oplus x_{15},\ x_{18} =  x_5 \oplus \alpha^2\cdot x_{17},\ x_{19} =  x_{14} \oplus  x_{18} = y_3, $$
$$ x_{20} =  x_{12} \oplus  x_{17} = y_4, $$
$$ x_{21} =  x_8 \oplus  x_{18} = y_5, $$
$$ x_{22} =  x_{13} \oplus  x_{21} = y_6,$$
where
\begin{small}
	\begin{equation*}
		\centering
		\alpha=\left[ \begin{array}{cccccccc}
			0 & 1 & 0 & 0 & 0 & 0 & 0 & 1\\
			1 & 0 & 0 & 0 & 0 & 0 & 0 & 0\\
			0 & 1 & 0 & 0 & 1 & 0 & 0 & 0\\
			0 & 0 & 1 & 0 & 0 & 0 & 0 & 0\\
			0 & 0 & 0 & 1 & 0 & 0 & 0 & 0\\
			0 & 0 & 0 & 0 & 1 & 0 & 0 & 0\\
			0 & 0 & 0 & 0 & 0 & 1 & 0 & 0\\
			0 & 0 & 0 & 0 & 0 & 0 & 1 & 0\\
		\end{array}
		\right]
	\end{equation*}
\end{small}
and its minimal polynomial is $\alpha$ is $x^8+x^6+x^5+x^3+1$. The cost of the entire implementation is $16\times 8 + 10\times 2 = 148$ xor gates.

\section{Conlusion}

In this paper we present a traversal algorithm tailored for the discovery of lightweight $4\times 4$ MDS matrices. As results, we successfully generate all implementation trees for $4\times 4$ MDS matrices using only 8 word-XOR operations. Based on this systematic approach, we derive a series of MDS matrices that exhibit the lowest computational cost. Additionally, we obtain the lowest-cost $4\times 4$ involutory MDS matrix currently known over the finite field $GL(2,8)$. Looking beyond the scope of $4\times 4$ matrices, we extend our method to higher-order MDS matrices. Specifically, we employ our method to obtain the lowest-cost $5\times 5$ and $6\times 6$ MDS matrices as well.

\bibliography{ref}

\begin{appendices}
	
	\section{The simplest trees with capacity 8}
	
	Tree 1-5 are have the type of (3, 3, 1, 1) and tree 6-8 have the type (4, 2, 1, 1).
	
	\begin{figure}[H]
		\centering
		\includegraphics[height=5.5cm, width=9.06cm]{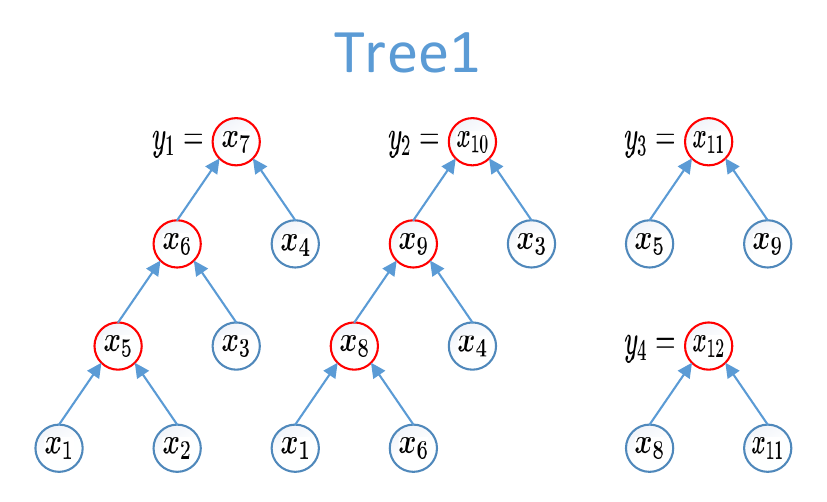}
	\end{figure}
	
	\begin{figure}[H]
		\centering
		\includegraphics[height=5.5cm, width=9.06cm]{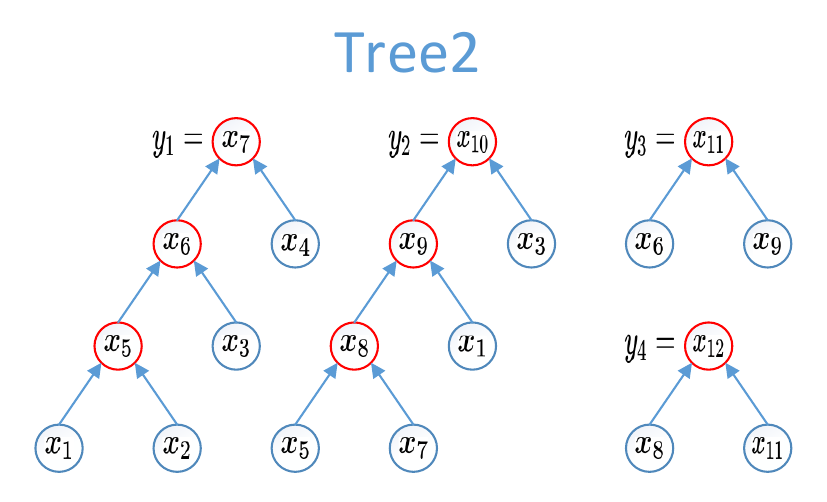}
	\end{figure}
	
	\begin{figure}[H]
		\centering
		\includegraphics[height=5.5cm, width=9.06cm]{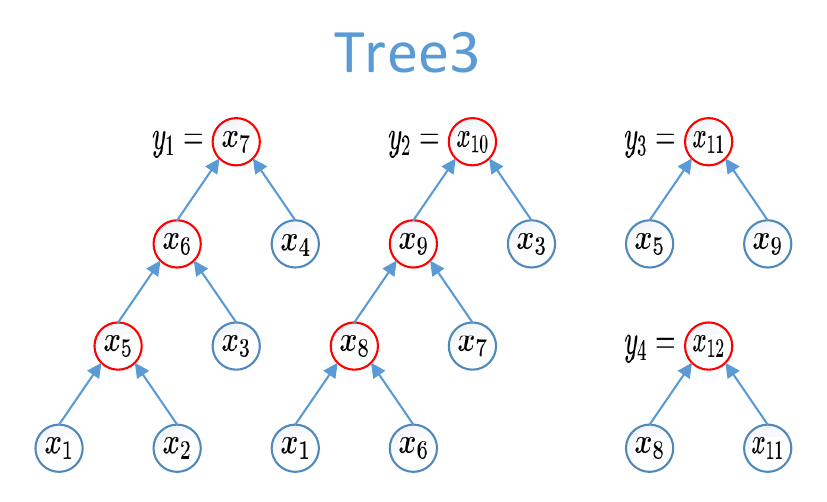}
	\end{figure}
	
	\begin{figure}[H]
		\centering
		\includegraphics[height=5.5cm, width=9.06cm]{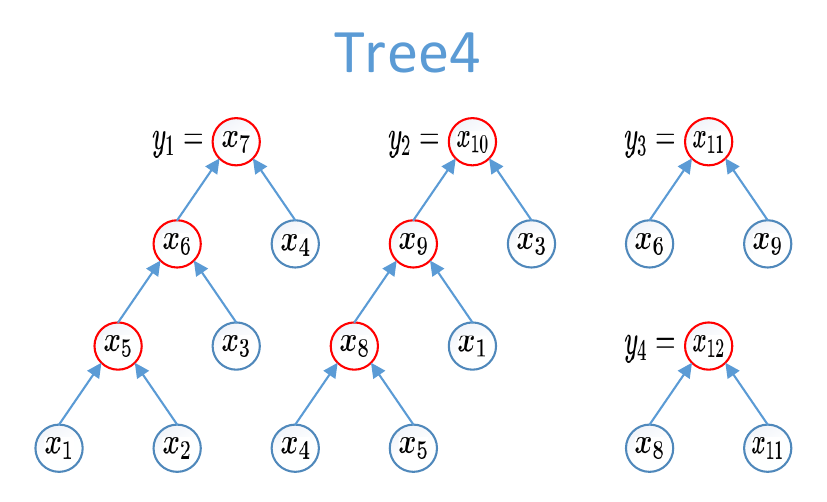}
	\end{figure}
	
	\begin{figure}[H]
		\centering
		\includegraphics[height=6cm, width=8.82cm]{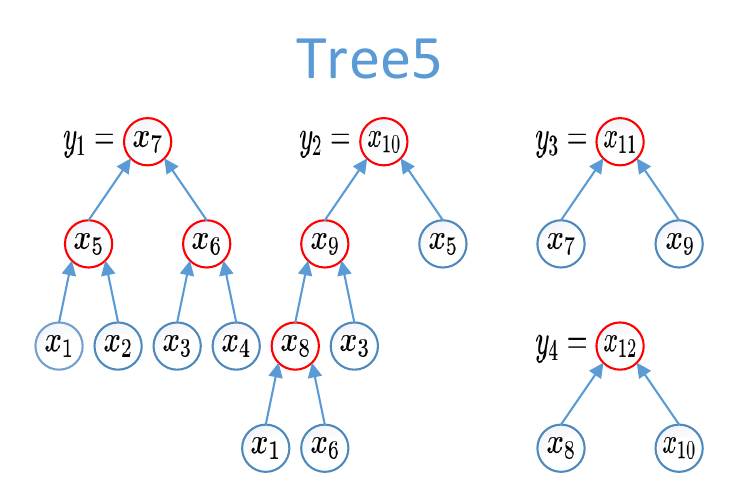}
	\end{figure}
	
	\begin{figure}[H]
		\centering
		\includegraphics[height=5.5cm, width=8.09cm]{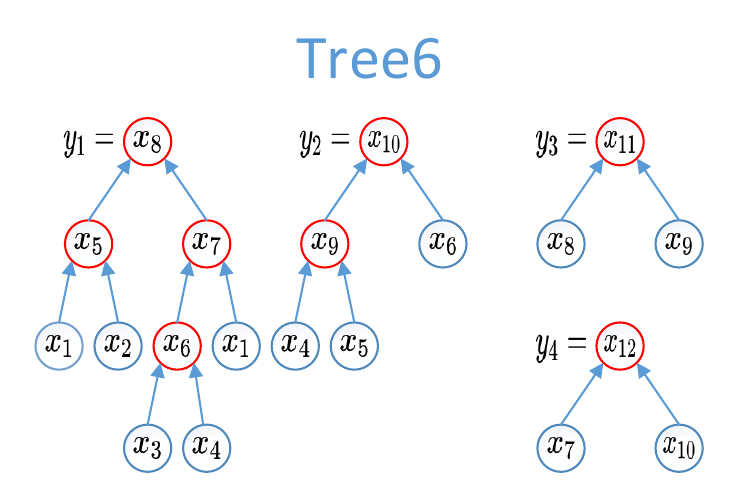}
	\end{figure}
	
	\begin{figure}[H]
		\centering
		\includegraphics[height=5.5cm, width=8.09cm]{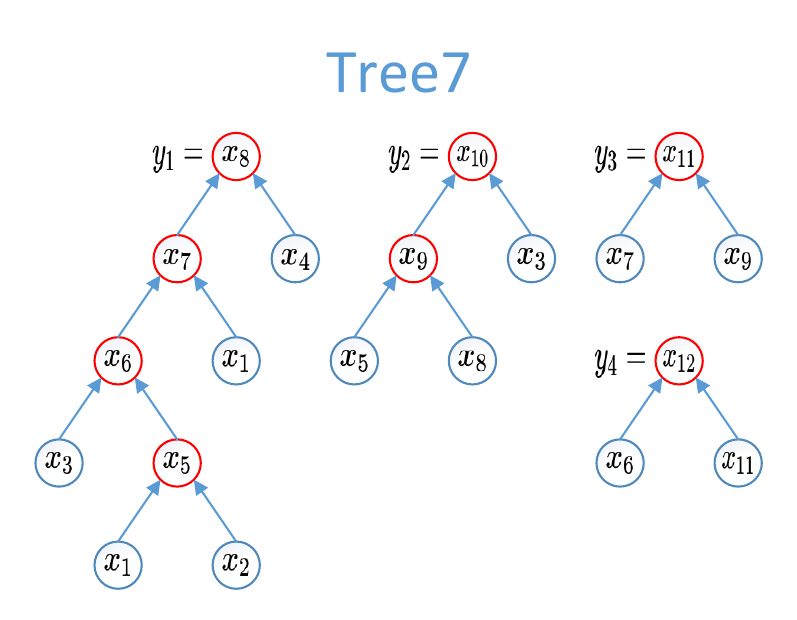}
	\end{figure}
	
	\begin{figure}[H]
		\centering
		\includegraphics[height=6.5cm, width=8.36cm]{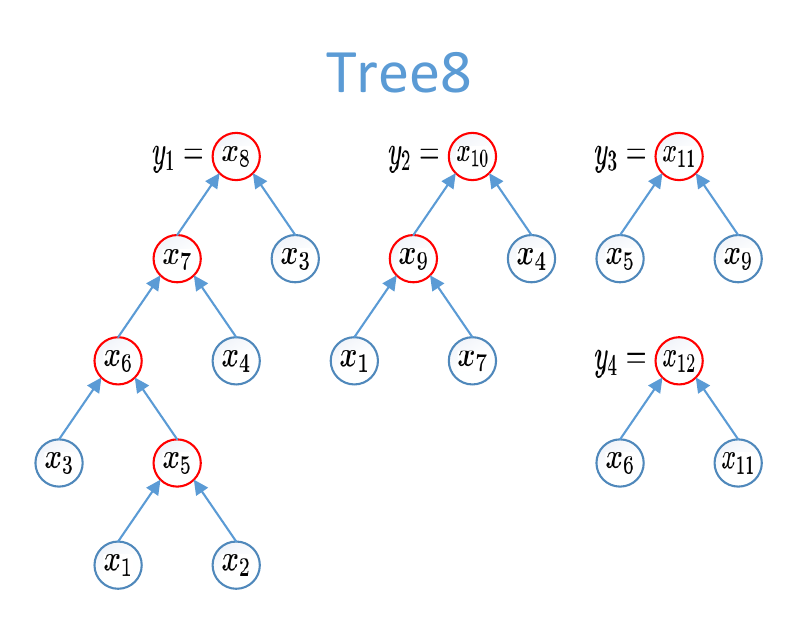}
	\end{figure}
	
	\section{The $4\times 4$ MDS matrices for different depths}

	\begin{table*}[!ht]
		\scriptsize
		\centering
		\caption{The $4\times 4$ MDS matrices for depth 4}
		\label{tab:depth4}
		\resizebox{\textwidth}{3.3cm}{
			\begin{tabular}{c|c|c}
				\hline
				Number & Matrix & Implimentation\\
				\hline
				\multirow{4}{*}{1}&\multirow{4}{*}{\scalebox{0.8}{ $\begin{array}{cccc} \alpha+\alpha^2 & \alpha^2 & 1 & 1\\
						1 & 1 & \alpha & 1+\alpha\\
						\alpha^2 & \alpha+\alpha^2 & 1 & 1+\alpha\\
						1+\alpha & 1 & 1+\alpha & \alpha
					\end{array}
					$}}
				& $x_5 =  x_1 \oplus  x_2,\ x_6 =  x_3 \oplus  x_4,\ x_7 = \alpha\cdot x_1 \oplus  x_6,\  x_8 = \alpha^2\cdot x_5 \oplus  x_7 = y_1$\\
				& & $x_{9} =    x_4 \oplus  x_5,\  x_{10} = \alpha\cdot x_6 \oplus  x_{9} = y_2$\\
				& & $x_{11} =   x_8 \oplus \alpha\cdot x_{9} = y_3$\\
				& & $x_{12} =   x_7 \oplus  x_{10} = y_4$\\
				\hline
				\multirow{4}{*}{2}&\multirow{4}{*}{\scalebox{0.8}{ $\begin{array}{cccc} \alpha^{-2}+\alpha^{-1} & \alpha^{-2} & 1 & 1\\
						1 & 1 & \alpha^{-1} & \alpha^{-1}+1\\
						\alpha^{-2} & \alpha^{-2}+\alpha^{-1} & 1 & \alpha^{-1}+1\\
						\alpha^{-1}+1 & 1 & \alpha^{-1}+1 & \alpha^{-1}
					\end{array}
					$} }
				& $x_5 =  x_1 \oplus  x_2,\ x_6 =  x_3 \oplus  x_4,\ x_7 = \alpha^{-1}\cdot x_1 \oplus  x_6,\  x_8 = \alpha^{-2}\cdot x_5 \oplus  x_7 = y_1$\\
				& & $x_{9} =    x_4 \oplus  x_5,\  x_{10} = \alpha^{-1}\cdot x_6 \oplus  x_{9} = y_2$\\
				& & $x_{11} =   x_8 \oplus \alpha^{-1}\cdot x_{9} = y_3$\\
				& & $x_{12} =   x_7 \oplus  x_{10} = y_4$\\
				\hline
				\multirow{4}{*}{3}&\multirow{4}{*}{\scalebox{0.8}{ $\begin{array}{cccc} \alpha^{-1}+\alpha^2 & \alpha^2 & 1 & 1\\
						1 & 1 & \alpha^{-1} & \alpha^{-1}+\alpha\\
						\alpha^{-1}+1+\alpha^2 & 1+\alpha^2 & 1 & 1+\alpha\\
						\alpha^{-1}+1 & 1 & \alpha^{-1}+1 & \alpha^{-1}+1+\alpha
					\end{array}
					$} }
				& $x_5 =  x_1 \oplus  x_2,\ x_6 =  x_3 \oplus  x_4,\ x_7 = \alpha^{-1}\cdot x_1 \oplus  x_6,\  x_8 = \alpha^2\cdot x_5 \oplus  x_7 = y_1$\\
				& & $x_{9} =   \alpha\cdot x_4 \oplus  x_5,\  x_{10} = \alpha^{-1}\cdot x_6 \oplus  x_{9} = y_2$\\
				& & $x_{11} =   x_8 \oplus  x_{9} = y_3$\\
				& & $x_{12} =   x_7 \oplus  x_{10} = y_4$\\
				\hline
				\multirow{4}{*}{4}&\multirow{4}{*}{\scalebox{0.8}{ $\begin{array}{cccc} \alpha^{-1}+\alpha^2 & \alpha^2 & 1 & 1\\
						1 & 1 & \alpha^{-1} & \alpha^{-1}+1\\
						\alpha^2 & \alpha^{-1}+\alpha^2 & 1 & \alpha^{-1}+1\\
						\alpha^{-1}+1 & 1 & \alpha^{-1}+1 & \alpha^{-1}
					\end{array}
					$} }
				& $x_5 =  x_1 \oplus  x_2,\ x_6 =  x_3 \oplus  x_4,\ x_7 = \alpha^{-1}\cdot x_1 \oplus  x_6,\  x_8 = \alpha^2\cdot x_5 \oplus  x_7 = y_1$\\
				& & $x_{9} =    x_4 \oplus  x_5,\  x_{10} = \alpha^{-1}\cdot x_6 \oplus  x_{9} = y_2$\\
				& & $x_{11} =   x_8 \oplus \alpha^{-1}\cdot x_{9} = y_3$\\
				& & $x_{12} =   x_7 \oplus  x_{10} = y_4$\\
				\hline
				\multirow{4}{*}{5}&\multirow{4}{*}{\scalebox{0.8}{ $\begin{array}{cccc} \alpha^{-2}+1 & \alpha^{-2} & 1 & 1\\
						1 & 1 & \alpha^{-1} & \alpha^{-1}+\alpha\\
						\alpha^{-2} & \alpha^{-2}+1 & 1 & 1+\alpha\\
						1+\alpha & 1 & \alpha^{-1}+\alpha & \alpha^{-1}
					\end{array}
					$} }
				& $x_5 =  x_1 \oplus  x_2,\ x_6 =  x_3 \oplus  x_4,\ x_7 =  x_1 \oplus  x_6,\  x_8 = \alpha^{-2}\cdot x_5 \oplus  x_7 = y_1$\\
				& & $x_{9} =   \alpha\cdot x_4 \oplus  x_5,\  x_{10} = \alpha^{-1}\cdot x_6 \oplus  x_{9} = y_2$\\
				& & $x_{11} =   x_8 \oplus  x_{9} = y_3$\\
				& & $x_{12} =  \alpha\cdot x_7 \oplus  x_{10} = y_4$\\
				\hline
				\multirow{4}{*}{6}&\multirow{4}{*}{\scalebox{0.8}{ $\begin{array}{cccc} \alpha^{-2}+1 & \alpha^{-2} & 1 & 1\\
						1 & 1 & \alpha & \alpha^{-1}+\alpha\\
						\alpha^{-2} & \alpha^{-2}+1 & 1 & \alpha^{-1}+1\\
						\alpha^{-1}+1 & 1 & \alpha^{-1}+\alpha & \alpha
					\end{array}
					$} }
				& $x_5 =  x_1 \oplus  x_2,\ x_6 =  x_3 \oplus  x_4,\ x_7 =  x_1 \oplus  x_6,\  x_8 = \alpha^{-2}\cdot x_5 \oplus  x_7 = y_1$\\
				& & $x_{9} =   \alpha^{-1}\cdot x_4 \oplus  x_5,\  x_{10} = \alpha\cdot x_6 \oplus  x_{9} = y_2$\\
				& & $x_{11} =   x_8 \oplus  x_{9} = y_3$\\
				& & $x_{12} =  \alpha^{-1}\cdot x_7 \oplus  x_{10} = y_4$\\
				\hline
				\multirow{4}{*}{7}&\multirow{4}{*}{\scalebox{0.8}{ $\begin{array}{cccc} 1+\alpha^2 & \alpha^2 & 1 & 1\\
						1 & 1 & \alpha^{-1} & \alpha^{-1}+1\\
						\alpha^{-1}+1+\alpha^2 & \alpha^{-1}+\alpha^2 & 1 & \alpha^{-1}+1\\
						1+\alpha & 1 & \alpha^{-1}+\alpha & \alpha^{-1}+1+\alpha
					\end{array}
					$} }
				& $x_5 =  x_1 \oplus  x_2,\ x_6 =  x_3 \oplus  x_4,\ x_7 =  x_1 \oplus  x_6,\  x_8 = \alpha^2\cdot x_5 \oplus  x_7 = y_1$\\
				& & $x_{9} =    x_4 \oplus  x_5,\  x_{10} = \alpha^{-1}\cdot x_6 \oplus  x_{9} = y_2$\\
				& & $x_{11} =   x_8 \oplus \alpha^{-1}\cdot x_{9} = y_3$\\
				& & $x_{12} =  \alpha\cdot x_7 \oplus  x_{10} = y_4$\\
				\hline
			\end{tabular}
		}
	\end{table*}
	
	In Table \ref{tab:depth4}, $\alpha$ represents the companion matrix of either $x^8+x^2+1$ or. The $4\times 4$ MDS matrices presented in Table \ref{tab:depth4} for depth 4 can be implemented using 69 XOR operations (for 8-bit input) or 37 XOR operations (for 4-bit input), representing the most cost-effective solution at this depth.
	
	As the depth decreases to 3, the minimum cost increases to 77 XOR operations (for 8-bit input) or 41 XOR operations (for 4-bit input). Owing to the strict limitation on depth, the available parameter options become constrained. For depth 3, there are three matrices that achieve the lowest cost. These matrices are enumerated below.
	\begin{equation*}
		\centering
		1:\left[ \begin{array}{cccc}
			1 & 1 & \alpha^{-1} & \alpha^{-1}+\alpha\\
			1 & 1+\alpha & \alpha & \alpha\\
			\alpha & \alpha^{-1} & 1+\alpha^{-1} & 1\\
			1+\alpha & 1 & 1 & 1+\alpha
		\end{array}
		\right ]
	\end{equation*}
	
	$$ x_5 =  x_1 \oplus  x_2,\ x_6 = \alpha\cdot x_4 \oplus  x_5,\ x_7 =  x_3 \oplus  x_4,\  x_8 =  x_6 \oplus \alpha^{-1}\cdot x_7 = y_1. $$
	$$ x_{9} =    x_2 \oplus  x_3,\  x_{10} = x_6 \oplus \alpha\cdot x_{9} = y_2. $$
	$$ x_{11} =  \alpha\cdot x_1 \oplus  x_{7},\ x_{12} = \alpha^{-1}\cdot x_9 \oplus  x_{11} = y_3. $$
	$$ x_{13} =   x_6 \oplus  x_{11} = y_4. $$
	
	\begin{equation*}
		\centering
		2:\left[ \begin{array}{cccc}
			1 & 1 & \alpha & \alpha^{-1}+\alpha\\
			1 & 1+\alpha^{-1} & \alpha^{-1} & \alpha^{-1}\\
			\alpha^{-1} & \alpha & 1+\alpha & 1\\
			1+\alpha^{-1} & 1 & 1 & 1+\alpha^{-1}
		\end{array}
		\right ]
	\end{equation*}
	
	$$ x_5 =  x_1 \oplus  x_2,\ x_6 = \alpha^{-1}\cdot x_4 \oplus  x_5,\ x_7 =  x_3 \oplus  x_4,\  x_8 =  x_6 \oplus \alpha\cdot x_7 = y_1 $$
	$$ x_{9} =    x_2 \oplus  x_3,\  x_{10} = x_6 \oplus \alpha^{-1}\cdot x_{9} = y_2 $$
	$$ x_{11} =  \alpha^{-1}\cdot x_1 \oplus  x_{7},\ x_{12} = \alpha\cdot x_9 \oplus  x_{11} = y_3 $$
	$$ x_{13} =   x_6 \oplus  x_{11} = y_4 $$
	
	\begin{equation*}
		\centering
		3:\left[ \begin{array}{cccc}
			\alpha^{-1} & \alpha & 1 & \alpha^{-1}\\
			\alpha^{-1}+1 & 1+\alpha & 1 & 1\\
			1& 1 & \alpha^{-1}+1 & \alpha^{-2}+1\\
			1+\alpha & 1 & 1+\alpha & 1
		\end{array}
		\right ]
	\end{equation*}
	
	$$ x_5 = \alpha^{-1}\cdot x_1 \oplus \alpha\cdot x_2,\ x_6 =  x_3 \oplus \alpha^{-1}\cdot x_4,\ x_7 =  x_5 \oplus  x_6=y_1   $$
	$$ x_{8} =    x_1 \oplus  x_3,\  x_{9} = x_2 \oplus \alpha^{-1}\cdot x_{4},\ x_{10} =  x_8 \oplus  x_9,\ x_{11} =  x_5 \oplus  x_{10} = y_2 $$
	$$ x_{12} = \alpha^{-1}\cdot x_6 \oplus  x_{10} = y_3 $$
	$$ x_{13} =  \alpha\cdot x_8 \oplus  x_{11} = y_4 $$
\end{appendices}

\end{document}